\documentclass[11pt]{article}
\usepackage[margin=1in]{geometry}

\usepackage[utf8]{inputenc}
\usepackage[T1]{fontenc}

\usepackage{amsfonts}
\usepackage{amsmath}
\usepackage{amssymb}
\usepackage{amsthm}

\usepackage{graphicx}
\usepackage{listings}
\usepackage{xcolor}
\usepackage{booktabs}
\usepackage{dsfont}
\usepackage{stmaryrd}
\SetSymbolFont{stmry}{bold}{U}{stmry}{m}{n}
\usepackage{enumitem}
\usepackage{pdfpages}
\usepackage{algorithm}
\usepackage{algorithmic}
\usepackage[]{hyperref}
\usepackage{bm}
\usepackage{bbm}
\usepackage{setspace}
\onehalfspacing
\usepackage{subcaption}
\usepackage{tikz}
\usetikzlibrary{positioning,arrows.meta}

\newtheorem{proposition}{Proposition}[section]
\newtheorem{definition}{Definition}[section]
\newtheorem{theorem}{Theorem}[section]

\newtheorem{corollary}{Corollary}[section]
\newtheorem{remark}{Remark}[section]
\newtheorem{assumption}{Assumption}

\newcommand{\cl}{\mathrm{cl}}
\newcommand{\op}{\mathrm{op}}

\usepackage[pagewise]{lineno}

\setlength{\parindent}{0pt}

\title{Learning Market Making with Closing Auctions}

 \author{Julius {\sc Graf}\footnote{\texttt{julius.graf@berkeley.edu}} \space and Thibaut {\sc Mastrolia}\footnote{ \texttt{mastrolia@berkeley.edu}}\\[0.3em]
\small Department of Industrial Engineering and Operations Research\\
\small University of California, Berkeley, USA.}

\date{\today}

\begin{document}
\maketitle

\begin{abstract} 
In this work, we investigate a market making execution problem on a trading session in which a continuous phase on a limit order book is followed by a closing auction. Whereas standard optimal market making models typically rely on terminal inventory penalties to manage end-of-day risk, ignoring the significant liquidity events available in closing auctions, we propose a deep reinforcement learning framework, consisting of a Deep Q-Network and its continuous-control actor-critic extensions (DDPG, TD3 and SAC), that explicitly incorporates this mechanism. We introduce a market making framework designed to explicitly anticipate the closing auction, continuously refining the projected clearing price as the trading session evolves. We develop a generative stochastic market model to simulate the trading session and to emulate the market. Our theoretical model and these deep reinforcement learning methods are applied on the generator in two settings: (1) when the mid price follows a rough Heston model with generative data from this stochastic model; and (2) when the mid price corresponds to historical data of assets from the S\&P 500 index and the performance of our algorithm is compared with stylized reference benchmarks from optimal market making.
\vspace{3mm}

\noindent{\bf Key words:} optimal market making, auction trading, reinforcement learning, Markov Decision Process, Deep Q-Learning
\vspace{3mm}

\end{abstract}

\section{Introduction}

\subsection{Reinforcement Learning and Market Making}

Market making is a cornerstone of modern electronic financial markets, providing liquidity by continuously posting buy and sell quotes while managing inventory and adverse selection risk with different participants having diverse objectives. Since the seminal work of Avellaneda and Stoikov \cite{avellaneda2008high} and its extension to explicit solutions in \cite{gueant2013dealing}, optimal market making has been studied extensively through stochastic control frameworks, leading to tractable strategies that balance expected profit against inventory risk under stylized assumptions on order flow dynamics and price evolution. We refer to \cite{cartea2015algorithmic} for a review of the literature on market making and high-frequency trading and to \cite{baldacci2021algorithmic,bellia2025market}  for recent advances in the topic. These models, however, typically rely on parametric assumptions that are difficult to validate empirically and may fail to adapt to the non-stationarity and strategic complexity of real-world markets.\vspace{0.5em}

The rapid growth of electronic trading and data availability combined with AI raising influence in financial industry has motivated the use of reinforcement learning as a flexible, data-driven alternative to classical control methods. Reinforcement learning allows a market maker to learn optimal quoting policies directly from interaction with the market, without requiring full knowledge of the underlying dynamics. Recent studies have demonstrated the promise of RL in market making and related problems, showing improved adaptability to complex market conditions, latent regimes, and evolving order flow patterns. The seminal article \cite{nevmyvaka2006reinforcement} introduces a Q-learning algorithm for market making problem on limit order book with many state variables.  It has been later extended in for example \cite{beysolow2019market,ning2021double,sirignano2019deep,gavsperov2021market,gueant2019deep,jerome2023mbt,leal2022learning} and adding regret analysis \cite{cesa2024market,cao2024logarithmic}. We also refer to \cite{hambly2023recent, cuchiero2024special,capponi2023machine} for comprehensive summaries of recent advances in RL techniques in finance.\vspace{0.5em}

While the notion of market making is an active topic of research including new developments with reinforcement learning technics, one aspect remains insufficiently explored in the RL-based market making literature. At the end of the day, most of exchanges are closing the market by triggering a closing auction, which plays a fundamental role in price discovery and liquidity provision in modern equity markets. Closing auctions concentrate a significant fraction of daily trading volume and often exhibit dynamics that differ markedly from those observed during continuous trading. Inventory held into the closing auction can be liquidated at a single clearing price, but doing so exposes the market maker to auction-specific price impact, imbalance risk, and strategic interactions. Traditional market making models typically ignore the auction or treat the end-of-day liquidation in an ad-hoc manner, while most reinforcement learning approaches focus exclusively on continuous limit order book trading.\vspace{0.5em}

This paper aims to bridge this gap by proposing a reinforcement learning framework for optimal market making that explicitly incorporates regret minimization and a closing auction mechanism. We consider a market maker who operates over a finite horizon, posting bid and ask quotes during continuous trading and facing a terminal liquidation opportunity through a closing auction. The agent does not assume knowledge of the true order arrival intensities, price impact, or auction clearing rules, and instead learns an optimal policy through interaction with the market.

\subsection{On the Importance of (Closing) Auctions}

Early works in financial auctions has been developed in \cite{kyle1985continuous}. In this article, Kyle provides the first tractable equilibrium model of a continuous auction with asymmetric information.
It explains how prices aggregate private information through order flow, introducing measure of market depth and price impact. This paper became the benchmark for analyzing liquidity, price discovery, and strategic trading in modern auction-based financial markets. While auction markets have been well-investigated in the economical community for discrete-time model, see for example \cite{milgrom1989auctions,vulcano2002optimal,milgrom2004putting} and have known a growing interest, especially since the work of the Nobel laureate Paul Milgrom, it stays challenging at the high frequency-level and for continuous time framework and has been pointed as one of the most challenging question in financial engineering in \cite[Section 5.1]{carmona2022influence}. \vspace{0.5em}

The big picture of the trading session considered in this paper is the following: along the session a market maker quotes on a limit order book to liquidate her position until the end of the continuous trading time session. This first session can be viewed as an end-of-day trading session. Then, the exchange triggers a closing auction for the next minutes of the day. During this auction, order are accumulated by the exchange along time, where participants proposes limit orders at which there are willing to buy or sell the asset with a specific volume, and limit order of the previous continuous trading phase are added as trading block for the auction clearing. At the end of the auction phase, known as the clearing time, a clearing price is set by the exchange to ensure as much transaction as possible to clear the market and trade the asset. This closing auction plays a fundamental role in market liquidation and empowers efficiency of price discovery as explained in \cite{kandel2012effect,raillon2020growing}. We also refer to \cite{milgrom2019auction} for an overview of auction mechanism.\vspace{0.5em}

On the one hand, auctions successfully fix mechanical flaws of limit order book like correlation breakdown as explained in \cite{budish2014implementation}. On the other hand, unlike a LOB trading, the key challenge of the auction phase is to set an efficient (clearing) price to trade the asset, given the different operations of market participants, known as the price discovery mechanism \cite{madhavan2000price,biais2002ipo,biais1999price}. The benefit of auctions for market quality, as reducing the spread between the clearing price and the efficient price of a risk asset has been investigated in the recent literature, for example \cite{du2014welfare,paul2021optimal,derchu2024ahead,salek2024equity,jantschgi2024transaction} by proposing incentives and optimal fees scheme to mitigate auctions' flaws \cite{mastrolia2024clearing,mastrolia2025optimal}.\vspace{0.5em}

Finally, we see in many markets nowadays that volumes at the closing auction have increased significantly in recent years. Ignoring this stage is therefore problematic when modeling market making. Furthermore, high-frequency market makers aim at ending each trading day with a flat position: it is likely they will use the auction to close their position if they need to. This underlines the importance of considering closing auctions in a end-of-day market making or liquidation framework.

\subsection{Methodology, Contributions and Financial Insights}

This work proposes a new market making execution model based on a reinforcement learning approach, designed to operate over a typical trading session while explicitly anticipating the closing auction at the end of the session. The goal of this exercise is to study whether different off-policy deep reinforcement learning methods are able to learn stable and effective end-of-day liquidation policies that anticipate a closing auction. As far as we know, this paper is the first considering a reinforcement learning method for CLOB optimal market making followed by anticipated closing auciton. The proposed framework builds on deep reinforcement learning (DRL). We consider Deep Q-Network (DQN) for the discrete action formulation \cite{watkins1992q,mnih2013playing,fan2020theoretical} as the baseline, together with continuous-action actor–critic relaxations (DDPG, TD3, SAC) that lift the discrete quoting grid to a continuous one. These approaches have been successfully applied to a wide range of financial problems, including optimal asset allocation, optimal execution in dark pools, and market making; see, for instance, \cite{neuneier1997enhancing, ganesh2019reinforcement, kearns2013machine, ning2021double, baldacci2021algorithmic}. \vspace{0.5em}

More specifically, we compare a baseline DQN-learning approach for market making with closing auction trading to a continuous-control extensions. The goal of Q-learning is to find the optimal action-value function $Q$ which yields the optimal policy. DQN consists in parameterizing the action-value function with a neural network $Q_\theta$ such that optimization over $Q$ becomes optimization over the weights $\theta\in \mathbb{R}^q$, for some $q \in \mathbb{N}^*$. Because the trading session splits into a continuous (CLOB) phase and a closing-auction phase with structurally different state and action spaces, we train two phase-specific networks $Q_\phi$ and $Q_\psi$, coupled at the auction opening. The action-value function $Q_\theta$ is then trained by minimizing a temporal-difference error on minibatches drawn from an experience-replay memory $\mathcal D$ of fixed capacity $M$, which decorrelates consecutive transitions \cite{mnih2013playing}; training is further stabilized by a target network held fixed and refreshed periodically \cite{sutton1998reinforcement,mnih2015human} and by the double-$Q$ correction \cite{van2016deep}. To accommodate continuous controls and have more relevant means of comparisons to the DQN method, we additionally consider continuous-action actor-critic methods as a relaxation of the discrete formulation: DDPG \cite{lillicrap2020continuous}, TD3 \cite{fujimoto2018addressing} and SAC \cite{haarnoja2018soft}.\vspace{0.5em}

For comparison, we use a stylized reference benchmark adapted from the optimal market making model introduced by \cite{avellaneda2008high} and derived in \cite{gueant2013dealing}, which allows us to quantitatively assess the efficiency and performance of the Q-learning-based methods considered in this study. We furthermore compare the performance to the time-weighted average price strategy.\vspace{0.5em}

The structure of this work is the following. Section \ref{sec:marketmodel} describes the general trading session structure investigated. We first introduce the mathematical framework in Section \ref{sec:math}, providing rigorous definitions of all stochastic processes, agents, and market participants involved in the model. The continuous trading phase and the auction phase are respectively described in Sections \ref{sec:lob} and \ref{sec:auction}.
We then introduce the auction clearing mechanism in Section \ref{sec:clearing}, where we also state the main theoretical result of this study: the existence of a clearing price under very general supply-demand functions given by Theorem \ref{th:clearing}. Section \ref{sec:proj} proposes an algorithm to predict a hypothetical clearing price during the limit order book trading session and to anticipate the conversion of unmatched limit orders into trade blocks for the future closing auction. This algorithm lies at the core of our contribution, as it directly links market making decisions in the continuous phase to the projected outcome of the closing auction.
In Section \ref{sec:MDP}, we formulate a Markov Decision Process modeling the market dynamics, the actions of the market maker, and the rewards generated by her activity across both the continuous trading phase and the auction phase. Section \ref{sec:learning} establishes the regret analysis and introduces the both the DQN algorithm and the continuous-control RL methods employed in our framework. Section \ref{sec:benchmark} recalls the main results of \cite{avellaneda2008high, gueant2013dealing}, which we use as a stylized reference benchmark to compare the profit-and-loss (PnL for short) performance of a market maker who anticipates the closing auction with one who does not. Section \ref{sec:numerics} explains the numerical methods we considered in this work.  Section \ref{sub:generative} introduces the generative stochastic model of the market we use to simulate the Markov Decision Process for the numerical simulations. Section \ref{sec:bnechmarksimu} presents the simulation for the two stylized reference benchmark models use for our comparative study. In Section \ref{sub:rough}, we generate synthetic data using a stylized Heston model for the asset price, combined with limit order book parameters calibrated to reflect the projected future closing auction. Section \ref{sub:historical} presents the numerical results obtained when using historical data of S\&P 500 assets for the mid price process.

\vspace{0.5em}
 
These numerical results highlight the benefits of anticipating the closing auction, as well as the effectiveness of the proposed deep reinforcement learning approach in maximizing the market maker’s PnL over a trading session for both stochastic rough models and historical data from the S\&P 500. 

\section{Market Model}\label{sec:marketmodel}

\subsection{Mathematical Framework and Trading Phases}\label{sec:math}

Along this work, we fix a probability space $(\Omega, \mathcal{F}, \mathbb{P})$, named the market, where $\Omega$ represents all the possible market configurations, $\mathcal F$ is a $\sigma$-algebra denoting the available information and $\mathbb P$ is the market probability. We consider a financial asset traded on the financial market along a trading session with price evolving randomly. We divide the trading period into two phases: a continuous phase on a limit order book with market makers and takers and an auction phase, seen as a usual closing auction. We fix 
 two deterministic times $0< \tau^\op< \tau^\cl$ representing respectively the opening and the closing of the auction phase and denote by $h= \tau^\cl- \tau^\op$ the auction duration. We assume that both $( \tau^\op,  \tau^\cl) \in \mathbb{N}^2$ and $h$ is a positive integer. Therefore, the trading horizon is divided into a continuous phase $[ 0,  \tau^\op )$, in which the trader interacts with the central limit-order book (CLOB) and a fixed length $h$ closing auction $[ \tau^\op, \tau^\cl]$. In what follows, we consider a fixed time grid $0 = t_0 < \cdots < t_n < \tau^\op = t_{n+1} < \cdots < t_m < \tau^\cl = t_{m+1}$ for $n \in \mathbb{N}^*$ and $m \in \mathbb{N}^*$. The initial time $t_0$ is the start of the considered trading session, but need not be the actual opening time of the market.\vspace{0.5em}
 
 We denote by $I_t$ the trader’s inventory at time $t$. This inventory is positive (resp. negative) for long position (resp. short positions) with respect to the traded risky asset.\vspace{0.5em}
 
We denote by $\alpha > 0$ the tick size of the asset, fixed by the exchange. We will consider three types of market participants in this study:
\begin{itemize}
\item a strategic market maker, named the agent, setting limit orders along the day,
\item exogenous market makers, fixing limit orders and providing liquidity during the continuous trading session on both side of the LOB and proposing limit prices during the auction session,
\item exogenous market takers. These participants submit aggressive market orders during both the CLOB and the auction phases to buy or sell the asset.
\end{itemize}
While we will focus on the optimization of the agent along the day, we use the term exogenous to emphasize the fact that other market makers and takers' optimizations are not considered here. In fact, we consider a single-agent decision problem against fixed background policies. All ``exogenous'' market participants are represented by a fixed stochastic data-generating mechanism which will be introduced later. This does not mean these participants have no strategic motives in an actual market, but that their individual optimization problems and endogenous responses to the ``market maker'' are outside the model. We are therefore solving a best-response problem against the specified background policy, not a market-equilibrium problem. We now turn to the details of the trading period $[0,\tau^\op)\cup[\tau^\op,\tau^\cl]$ composed by continuous trading activities on the CLOB followed by a closing auction.  

\subsubsection{Trading During the Continuous Phase \texorpdfstring{$[ 0,  \tau^\op )$}{}}\label{sec:lob}

During the continuous phase on the LOB, we assume that each market participant observes the mid price $S_t^\mathrm{mid} = \alpha k_t^\mathrm{mid}$ at any time $t< \tau^\op$, where $k^\mathrm{mid}_t\in \mathbb N$ represents the number of tick at which the mid price is priced. Exogenous market takers take submit market orders on both side of the market and consume liquidity. The number of market taker arriving on the side $\zeta\in\{+,-\}$ follows a counting process denoted by $N^\zeta$, where $\zeta = +$ denotes the ask side, and $\zeta = -$ the bid side. In other words, $N_t^\zeta$ market takers have arrived on the side $\zeta$ (e.g., if $\zeta=+$, then $N^+$ corresponds to buying market takers, consuming liquidity on the sell side). We assume that each market taker $i \leq N_t^\zeta$ submits a market order with volume $\nu_t^{\zeta,i}$ on the $\zeta$ side of the CLOB at time $t$.\vspace{0.5em}

The agent is a market maker and submit limit orders characterized by a limit price denoted by $S_t^\bullet=\alpha k_t$ at time $t$ where $k_t$ denoted the number of tick chosen at time $t$ to price the asset, and a proposed volume $v_t$. The agent therefore submits an order characterized by the pair $(k_t,v_t)$ at time $t$ on the limit order book. We assume that the agent is selling her inventory $I$ on the LOB during the continuous phase. This order is thus a limit order on $v_{t} \leq I_{t}$ shares at price $S_{t}^\bullet = \alpha k_t$. 
\begin{remark}
    Note that $\delta_{t} :=k_t-k_t^\mathrm{mid}$ represents the number of tick between the sell limit order proposed by the agent and the mid price, seen as the ask-spread of the agent.
\end{remark}

The liquidity provided by exogenous market makers at price level $S_{t}^{\zeta,j} = \alpha k_{t}^{\zeta,j}$, where $k_{t}^{\zeta,j} = k_{t}^\mathrm{mid}+\zeta j$ for $j \in \mathbb{Z}$ and $\zeta \in \{+,-\}$ is given by the volume $V_{t}^{\zeta,j}$ at any time $t$. The depth of the order book on side $\zeta$ is given by
\[L_{t}^\zeta = \inf\{j \geq 1:V_{t}^{\zeta,j} =0\}.\]

\begin{assumption}
The agent is always executed with priority at a fixed depth of the CLOB, \textit{i.e.}, she systematically posts her orders at a predetermined price level and is assumed to be the fastest participant at that level. 
\end{assumption}
To abstract from queue-position dynamics and latency competition, we adopted the favorable execution convention that the ``market maker'' has execution priority with respect to other participants.

 \begin{remark}
 Based on the random arrivals of market takers, there is (at least partial) execution on the order of the agent if given that a market taker arrives at time $t$ the following condition is satisfied  
\[
   \sum_{i=N_{t-}^+}^{N_t^+}\nu_t^{+,i} > \sum_{j < k_t} V_{t}^{+,j}.
\]
Note that as soon as the buying volume of market takers reaches selling index $k_t$ of the agent, her order gets (at least partially) executed.
\end{remark}

The number of executed shares at time $t$ of the agent is then given by 
\[E_{t} = \max\left(0, \min\left(v_{t}, \sum_{i=N_{t-}^+}^{N_t^+}\nu_t^{+,i}- \sum_{j < {\delta_t}} V_{t}^{+,j}\right)\right).\]
We recall that $E_{t}$ is a random variable since $v_t$ and $V_t^{+,j}$ are random. The inventory of the agent between $t$ and $t+\Delta t$ is $I_{t+\Delta t} = I_{t}-E_{t}$ for $t \in [0,\tau^\op)$ and some $\Delta t > 0$.\vspace{0.5em}

Motivated by the reinforcement learning approach, we assume that the agent will trade during this session until a fixed deterministic time $t_n<\tau^\op$ before the market switch to the closing auction phase, where $n$ denotes the number of operations made by the agent along the CLOB session. 

\subsubsection{Trading During the Auction Phase \texorpdfstring{$[\tau^\op, \tau^\cl)$}{}} \label{sec:auction} At time $\tau^\op$, the system transitions to an auction, opened by the exchange. Similarly to \cite{derchu2024ahead,mastrolia2024clearing} and motivated by the reinforcement learning approach with the Markov Decision Process modeling the agent interacting with the market, we assume that the agent is setting bids at deterministic fixed time along the auction duration.  

\begin{assumption}
    The agent bids along the auction at discrete times $\tau^\op=t_{n+1}<\dots<t_m<\tau^\cl$.
\end{assumption}
The inventory $I_{ \tau^\op}$ of the agent remaining from the continuous phase, \textit{i.e.} that has not been liquidated, is then traded on this auction. More precisely, for all $t \in \{t_{n+1},\dots,t_m\}$, the agent observes exogenous market and limit orders arriving on the auction. Market orders are composed by a certain volume to be bought/sold no matter the price is set at the clearing time by the exchange, while limit orders are set along the auction through a supply function (functional volume to sell/buy below/above a certain price). Every market participant can cancel prior orders, unlike in the continuous trading phase. The agent chooses an action, which will be a limit order to submit, and/or the cancellation of a previous order. In this sense, the agent reacts to the environment (since he posts his order after seeing the orders of other market participants). After his final action at the terminal trading time $t_m<\tau^\cl$, the system transitions to a final state that will allow to compute the clearing price and the exchanged volume, thus allows to compute the terminal reward of the agent. Exogenous market participants do not modify their offers from $t_m$ to $\tau^\cl$. Solely the agent can cancel his older orders and submit a final limit order. Similarly to \cite{du2014welfare,paul2021optimal,mastrolia2025optimal} we assume that the agent submit a linear supply curve to the auction stated in the following assumption.

\begin{assumption}
    The agent has a linear supply curve $\Sigma_t:p\in \alpha \mathbb N \longmapsto  K_t^a(p-S_t^a)$ for all $t \geq \tau^\op$, where $K_t^a\geq0$ and $S_t^a\in \alpha\mathbb N$.
\end{assumption}

At each $t_j\in\{t_{n+1},\dots,t_m\}$ the agent controls $K_{t_j}^a \geq 0$ and $S_{t_j}^a\in\alpha\mathbb N$ so that $\Sigma_{t_j}(p)$ represents the number of shares the agent is willing to sell at price $p$. If $\Sigma_{t_j}(p)\leq 0$ the agent is willing to buy at price $p$ or below and conversely if $\Sigma_{t_j}(p)\geq 0$ the agent is willing to sell at price $p$ or above. We let the supply function unsigned, but the agent will be penalized for dealing on the wrong side \textit{i.e.}, as a buyer while he is supposed to be a seller, similarly to \cite{mastrolia2024clearing}.\vspace{0.5em}

We allow the agent to cancel her past bids at cost $d_j = (j-n-1)d$ at time $t_j$ for $j \in \{n+1,\ldots, m\}$ and $d>0$. Let $c_{t_j} \in \{0,1\}$ where $c_{t_j} = 1$ if and only if, at time $t_j$, the agent cancels all orders submitted prior to time $t_j$. In particular, at the opening of the auction, we set $c_{t_{n+1}} = 0$. We track these actions through a variable $\theta_{t_j}$, with $\theta_{t_n} = \mathbf{0}$ and $\theta_{t_{n+1}} = \mathbf 0$, where $\mathbf 0\in \mathbb{R}^{m-n}$ is the $m-n$-vector with 0 components. We recursively define $\theta_{t_j} = \max( \theta_{t_{j-1}}, c_{t_{j-1}}\sum_{k=1}^{j-n-2}\mathbf{e}_{k})$, where $\mathbf{e}_{k} = (\delta_{i,k})_{1 \le i \le m-n}\in \mathbb{R}^{m-n}$. By definition, $\theta_{t_j}^{(s)}$ indicates whether the order submitted at time $t_{n+s}$ for $s \in \{1,\ldots,j-n-2\}$ has been canceled prior to time $t_j$.

\begin{remark}
    In this formulation, the cancellation penalty $d_jc_{t_j}$ scales linearly with the time index $j$. We interpret this cost not as a strictly cumulative fee on historical orders -- which formally could be subjected to redundant cancellations --  but rather as a structural mechanism by the exchange to penalize ``illusory liquidity''. Without this time-increasing cost, an agent could add liquidity early to obtain potential rebates from the exchange, only to withdraw their orders later and snipe right before the clearing time, once most information has been revealed. As we will see later, the exchange considered in this work balances this penalty by incentivizing step-by-step trading through a fictive reward obtained for participating throughout the auction.
\end{remark}

At each trading time $t\in \{t_{n+1},\dots, t_m\}$, the agent submits an order $(K_t^a, S_t^a, c_t)$ to the market. Note that with the definition above, $c_t$ has no impact on the order $(K_t^a,S_t^a)$, only on orders sent at time strictly before $t$. The inventory of the agent remains frozen during the auction, \textit{i.e.} $I_t = I_{ \tau^\op}$ for $t< \tau^\cl$.\vspace{0.5em}

During the auction, we suppose in addition that both exogenous market makers and market takers are present in the auction and that the agent has access to full information on their activities. At each trading time $t$, the number of bids sent by exogenous market makers is denoted by $M_t$. Each bid sent by these actors is a limit orders, each with volume $g_{i,t}(p)$, which is the supply schedule of the $i$-th bid offer present at time $t$ for $i \leq M_t$. As for the agent, we assume that the other market makers are not ``signed'', meaning that they are willing to be either seller or buyer depending on the clearing price set by the exchange at $\tau^\cl$.\vspace{0.5em}

Market takers submit market orders in the auction. Let $N_t^+$ (resp. $N_t^-$) be the number of buying (resp. selling) market orders arrived up to time $t\in \{t_{n+1},\dots,t_m\}$. For $\zeta \in \{+,-\}$, market taker $i \leq N_t^\zeta$ submits a volume $\nu_t^{\zeta,i}$. Market takers can cancel their order along the auction, for instance, volume $\nu_{t_{n+s}}^{\zeta,i}$ can be set to zero at a time $t_j$ should the market maker of the $i$-th bid on side $\zeta$ decide to cancel his order from time $t_{j-1}$ to $t_j$, with $i \leq N_{t_{j-1}}^\zeta$, for all $j \in \{n+1,\dots,m\}$ and $s \in \{1,\dots,j-1-n\}$. If no cancellation occurs, we keep $\nu_{t_{j-1}}^{\zeta,i} = \nu_{t_j}^{\zeta,i}$ for $i \leq N_{t_{j-1}}^\zeta$. New market  orders having arrived at time $t$ are thus indexed by $N_{t_{j-1}}^\zeta < i \leq N_{t_j}^\zeta$. Note that with this notation, there are indeed more and more market orders in the market over time, as the auction progresses.

\subsubsection{Clearing Price Rule and Estimation Along the Auction}\label{sec:clearing}

After time $t_m$, the system moves into a final stage to set the clearing price of the auction. The market makers and takers can first send a last order in the auction, then the agent can still send a final limit order and/or cancel past ones. At time $t = \tau^\cl$, the auction matches total demand and supply to maximize the exchanged volume at a uniform clearing price $S_{\tau^\cl}^\cl$. The clearing price $S_{\tau^\cl}^\cl$ solves the equation
\begin{equation}
\label{eq:hyp_cl_auction}
    \sum_{i = 1}^{M_{t_m}}g_{i,t_m}(p) + \sum_{s=n+1}^{m} \left(1-\theta_{t_{m+1}}^{(s-n)}\right)K_{t_s}^a(p-S_{t_s}^a)- \sum_{\zeta \in \{+,-\}}\sum_{i=1}^{N_{t_m}^\zeta} \zeta \nu_{t_m}^{\zeta,i}= 0,\text{ for } p\in \mathbb{R}^+.
\end{equation}

Along the auction, we assume that the agent computes the projected clearing price by solving the equation
\begin{equation}
\label{eq:hyp_cl_auction_estimation}
    \sum_{i = 1}^{M_{t_{j-1}}}g_{i,t_{j-1}}(p) + \sum_{s=n+1}^{j-1} \left(1-\theta_{t_j}^{(s-n)}\right)K_{t_s}^a(p-S_{t_s}^a)- \sum_{\zeta \in \{+,-\}}\sum_{i=1}^{N_{t_{j-1}}^\zeta} \zeta \nu_{t_{j-1}}^{\zeta,i}= 0,\text{ for } p\in \mathbb{R}^+
\end{equation}
which is the clearing price equation were the auction to close at time $t_j \in \{t_{n+2},\ldots, t_m\}\cup \{\tau^\cl\}$. The estimation at time $t_j=t_{m+1}=\tau^\cl$ corresponds to solve \eqref{eq:hyp_cl_auction} fitting with the exchange clearing rule. We now provide sufficient conditions on ensuring existence of a solution to \eqref{eq:hyp_cl_auction_estimation}. Note that this condition is not necessary since for linear supply and demand curve for the agent there always exists a solution, see Proposition \ref{prop:linear} below. This theorem is however the first one as far as we know building a quantitative clearing rule for general supply and demand curve function for an active agent in an auction. In the below theorem, we allow $p \in \mathbb{R}$ as solution for Equation \eqref{eq:hyp_cl_auction_estimation}.

\begin{theorem}[Existence of a unique (estimated) clearing price]\label{th:clearing}
Let $t_j \in \{t_{n+2},\ldots,t_m\}\cup\{\tau^{\cl}\}$. Assume that $\lim\limits_{p\to +\pm\infty} g_{i,t_{j-1}}(p)=\pm\infty$
and $p\longmapsto g_{i,t_{j-1}}(p)$ is continuous and strictly increasing for any $i\leq M_{t_{j-1}}$. Assume moreover that one of the following condition is satisfied
\begin{itemize}
    \item[(a)] $\sum_{s=n+1}^{j-1}(1-\theta_{t_j}^{(s-n)}) K_{t_s}^a=0$, 
    \item[(b)] $\sum_{s=n+1}^{j-1}(1-\theta_{t_j}^{(s-n)}) K_{t_s}^a>0$ and $g_{i,t_{j-1}}$ are Lipschitz uniformly in $i$, that is there exists a constant $L_{t_j}>0$ such that for any exogenous market maker $i \leq M_{t_{j-1}}$ we have
    \[|g_{i,t_{j-1}}(p)-g_{i,t_{j-1}}(\tilde p)|\leq L_{t_j}|p-\tilde p|.\] Let 
    \[\lambda_{t_j} := \frac{M_{t_{j-1}}}{\sum_{s=n+1}^{j-1}(1-\theta_{t_j}^{(s-n)}) K_{t_s}^a},\] with $L_{t_j}<\frac1{\lambda_{t_j}}$.
\end{itemize}
Then, the estimated clearing price equation \eqref{eq:hyp_cl_auction_estimation} admits a unique solution. 
\end{theorem}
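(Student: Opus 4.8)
The plan is to recast equation \eqref{eq:hyp_cl_auction_estimation} as a root-finding problem for a single scalar function $F_{t_j}:\mathbb{R}^+\to\mathbb{R}$ defined by
\[
F_{t_j}(p) = \sum_{i=1}^{M_{t_{j-1}}} g_{i,t_{j-1}}(p) + \Big(\sum_{s=n+1}^{j-1}(1-\theta_{t_j}^{(s)})K_{t_s}^a\Big)\, p - \sum_{s=n+1}^{j-1}(1-\theta_{t_j}^{(s)})K_{t_s}^a S_{t_s}^a + \sum_{\zeta\in\{+,-\}}\sum_{i=1}^{N_{t_{j-1}}^\zeta}\zeta\,\nu_{t_{j-1}}^{\zeta,i},
\]
and to show $F_{t_j}$ has a unique zero. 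In case (a) the agent's aggregate slope vanishes, so $F_{t_j}$ is just the sum of the $g_{i,t_{j-1}}$ plus a constant; continuity and strict monotonicity of each $g_{i,t_{j-1}}$ give a continuous, strictly increasing $F_{t_j}$, and the hypothesis $\lim_{p\to\pm\infty}g_{i,t_{j-1}}(p)=\pm\infty$ forces $F_{t_j}(p)\to\pm\infty$, so the intermediate value theorem plus strict monotonicity yield existence and uniqueness. (One should note the $p\to-\infty$ limit is only needed to extend the argument cleanly; restricting to $p\in\mathbb{R}^+$ one instead checks the sign of $F_{t_j}(0)$ — but since the statement is phrased with limits at $\pm\infty$ I would argue on all of $\mathbb{R}$ and then remark that the root lies in $\mathbb{R}^+$ under the natural sign conventions, or simply work on $\mathbb{R}$ throughout.)

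The substantive case is (b), where the aggregate slope $\Sigma^a := \sum_{s=n+1}^{j-1}(1-\theta_{t_j}^{(s)})K_{t_s}^a$ is strictly positive but the $g_{i,t_{j-1}}$ are only assumed Lipschitz, not monotone, so $F_{t_j}$ need not be monotone and the IVT argument alone does not give uniqueness. Here I would divide through by $\Sigma^a$ and write the equation as a fixed-point problem $p = \Phi_{t_j}(p)$, where $\Phi_{t_j}(p) = -\frac{1}{\Sigma^a}\big(\sum_i g_{i,t_{j-1}}(p) + \text{const}\big)$. The Lipschitz bound gives $|\Phi_{t_j}(p)-\Phi_{t_j}(\tilde p)| \le \frac{1}{\Sigma^a}\sum_{i=1}^{M_{t_{j-1}}} L_{t_j}|p-\tilde p| = \frac{M_{t_{j-1}}L_{t_j}}{\Sigma^a}|p-\tilde p| = \lambda_{t_j} L_{t_j}|p-\tilde p|$, and the hypothesis $L_{t_j} < 1/\lambda_{t_j}$ makes $\Phi_{t_j}$ a contraction on $\mathbb{R}$ (a complete metric space), so Banach's fixed-point theorem delivers a unique fixed point, hence a unique solution of \eqref{eq:hyp_cl_auction_estimation}. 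If the problem genuinely must live on $\mathbb{R}^+$ rather than $\mathbb{R}$, I would either invoke that $\Phi_{t_j}$ maps a suitable closed half-line or interval into itself using the growth hypotheses on the $g_{i,t_{j-1}}$, or re-derive existence on $\mathbb{R}^+$ via IVT (using that each $g_{i,t_{j-1}}$ is increasing — wait, in case (b) it is increasing too, per the shared hypothesis) and uniqueness via the contraction estimate restricted to $\mathbb{R}^+$.

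The main obstacle I anticipate is bookkeeping rather than conceptual: making sure the contraction constant is exactly $\lambda_{t_j}L_{t_j}$ requires that the only $p$-dependent terms on the exogenous side are the $g_{i,t_{j-1}}(p)$ (the market-taker volumes $\nu_{t_{j-1}}^{\zeta,i}$ are price-independent market orders, so they contribute only to the constant — I would state this explicitly), and that the count of exogenous makers entering the Lipschitz sum is precisely $M_{t_{j-1}}$, matching the numerator of $\lambda_{t_j}$. A secondary point worth a sentence: in case (b) the strict monotonicity of the $g_{i,t_{j-1}}$ is not actually needed for the Banach argument, only the Lipschitz bound; I would mention this so the reader sees case (b) is self-contained. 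I would then close by remarking that the case $t_j = \tau^\cl$ is included since \eqref{eq:hyp_cl_auction_estimation} reduces to \eqref{eq:hyp_cl_auction}, so the theorem simultaneously establishes existence and uniqueness of the true clearing price.
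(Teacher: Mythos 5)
Your proposal is correct and follows essentially the same route as the paper: case (a) by continuity, strict monotonicity and the limit conditions on the $g_{i,t_{j-1}}$, and case (b) by rewriting the clearing equation as a fixed point of the map $\phi(p) = -\bigl(\sum_i g_{i,t_{j-1}}(p) + \text{const}\bigr)/\sum_{s}(1-\theta_{t_j}^{(s)})K_{t_s}^a$, whose Lipschitz constant $\lambda_{t_j}L_{t_j}<1$ makes it a contraction on $\mathbb{R}$, so Banach's fixed-point theorem applies. Your side remarks (the $\mathbb{R}^+$ versus $\mathbb{R}$ domain issue, the price-independence of the market-order terms, and the fact that monotonicity is not needed in case (b)) are accurate refinements of points the paper leaves implicit.
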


Condition (a) reflects the absence of the agent in the auction. The estimated clearing price can still be set by the agent by observing the activities of other participants. In this case, the clearing price is estimated as the equilibrium between limit orders of exogenous market makers and takers only. Condition (b) corresponds to a situation in which the agent has provided active liquidity in the auction at time $t_j$. 

\begin{proof}[Proof of Theorem \ref{th:clearing}]
  Regarding Case (a), the existence of a solution to \eqref{eq:hyp_cl_auction_estimation} follows directly from the properties of $g_{i,t_{j-1}}$ (increasing, continuous with its limit conditions). Now, consider Case (b) and suppose that the agent has sent at least one order, that is, $\sum_{s=n+1}^{j-1}(1-\theta_{t_j}^{(s-n)}) K_{t_s}^a>0 $. Define
\[\phi(p) = \frac{\sum_{\zeta \in \{+,-\}}\sum_{i=1}^{N_{t_{j-1}}^\zeta} \zeta \nu_{t_{j-1}}^{\zeta,i}+\sum_{s=n+1}^{j-1} \left(1-\theta_{t_j}^{(s-n)}\right)K_{t_s}^aS_{t_s}^a - \sum_{i = 1}^{M_{t_{j-1}}}g_{i,t_{j-1}}(p)}{\sum_{s=n+1}^{j-1}(1-\theta_{t_j}^{(s-n)}) K_{t_s}^a}.\]
We want to ensure the existence of a fixed point of $\phi$. for any price $p,\tilde p\in \mathbb R$ we have
\[|\phi(p)-\phi(\tilde p)|\leq \frac{\sum_{i=1}^{M_{t_{j-1}}}|g_{i,t_{j-1}}(p)-g_{i,t_{j-1}}(\tilde p)|}{\sum_{s=n+1}^{j-1}(1-\theta_{t_j}^{(s-n)}) K_{t_s}^a}\leq \lambda_{t_j} L_{t_j}|p-\tilde p|.\]

As soon as $L_{t_j}\lambda_{t_j} < 1$, the function $\phi$ is a contraction map on $\mathbb R$.
\end{proof}

\begin{corollary}\label{cor:kbar} Assume that the assumptions of Theorem \ref{th:clearing} in the case (b) are satisfied with $\sum_{s=n+1}^{j-1}(1-\theta_{t_j}^{(s-n)}) K_{t_s}^a\geq \underline K>0$,  and moreover $M_{t_{j-1}}$ is bounded by $\overline M>0$. Then by choosing $L_{t_j} = (1-\varepsilon) \underline{K}/\overline M$-Lipschitz with $\varepsilon>0$, there exists a unique clearing price solving the clearing rule equation \eqref{eq:hyp_cl_auction}.
    
\end{corollary}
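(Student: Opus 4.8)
The plan is to verify that the prescribed Lipschitz constant $L_{t_j} = (1-\varepsilon)\underline K/\overline M$ makes the quantity $L_{t_j}\lambda_{t_j}$ appearing in the proof of Theorem~\ref{th:clearing}(b) strictly smaller than one, so that the corollary follows at once by invoking that theorem at $t_j=\tau^\cl$, where the estimated clearing price equation \eqref{eq:hyp_cl_auction_estimation} reduces to the exchange clearing rule \eqref{eq:hyp_cl_auction}.

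First I would record the two quantitative bounds now in force, namely $\sum_{s=n+1}^{j-1}(1-\theta_{t_j}^{(s)})K_{t_s}^a\geq \underline K>0$ and $M_{t_{j-1}}\leq \overline M$. Since $\overline M>0$ and we are in case (b), the ratio $\underline K/\overline M$ is a well-defined positive number, and from the definition $\lambda_{t_j}=M_{t_{j-1}}/\sum_{s=n+1}^{j-1}(1-\theta_{t_j}^{(s)})K_{t_s}^a$ we get $\lambda_{t_j}\leq \overline M/\underline K$, i.e. $1/\lambda_{t_j}\geq \underline K/\overline M$.

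Then, taking $0<\varepsilon<1$ so that $L_{t_j}=(1-\varepsilon)\underline K/\overline M$ is strictly positive and hence compatible with the $g_{i,t_{j-1}}$ being increasing, a one-line estimate gives $L_{t_j}\lambda_{t_j}\leq (1-\varepsilon)\frac{\underline K}{\overline M}\cdot\frac{\overline M}{\underline K}=1-\varepsilon<1$. In particular $L_{t_j}<1/\lambda_{t_j}$, so all the hypotheses of Theorem~\ref{th:clearing}(b) are met; the map $\phi$ constructed there is a contraction on $\mathbb R$, and Banach's fixed point theorem yields a unique solution of \eqref{eq:hyp_cl_auction_estimation}. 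Applying this at $t_j=\tau^\cl$ (i.e. $j=m+1$), for which \eqref{eq:hyp_cl_auction_estimation} coincides term by term with \eqref{eq:hyp_cl_auction}, gives the claimed existence and uniqueness of the clearing price.

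There is essentially no obstacle here: the corollary is a quantitative repackaging of Theorem~\ref{th:clearing}(b). The only points deserving a moment's care are (i) checking that the prescribed constant $L_{t_j}$ is admissible, namely that $\underline K/\overline M$ is finite and positive and that $\varepsilon$ must be chosen in $(0,1)$, and (ii) observing that uniformly $L_{t_j}$-Lipschitz supply schedules with $\lim_{p\to\pm\infty}g_{i,t_{j-1}}(p)=\pm\infty$ genuinely exist (affine curves of slope $L_{t_j}$ will do), so that the hypothesis is not vacuous.
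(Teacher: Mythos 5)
Your proof is correct and follows essentially the same route as the paper: verify from the definition of $\lambda_{t_j}$ that $L_{t_j}=(1-\varepsilon)\underline K/\overline M$ satisfies $L_{t_j}\lambda_{t_j}\leq 1-\varepsilon<1$ and then invoke the contraction argument of Theorem \ref{th:clearing}(b). Your added remark that $\varepsilon$ must lie in $(0,1)$ for $L_{t_j}$ to be a positive (hence admissible) Lipschitz constant is a small but valid refinement of the paper's statement, which only requires $\varepsilon>0$.
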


\begin{proof}
   The proof is a direct consequence of the definition of $\lambda_{t_j}$ checking that $L_{t_j}:=(1-\varepsilon) \underline{K}/\overline M<\frac{1}{\lambda_{t_j}}$.
\end{proof}
\begin{remark}
    The additional condition in Corollary \ref{cor:kbar} is equivalent to assume that at time $t_j$ the agent has submitted at least one active order in the auction without canceling it before time $t_j$.
\end{remark}

The preceding theorem provides conditions under which the clearing equation in our model admits a unique solution when the background participants' supply and demand schedules are not restricted to be linear. In practice, market participants submit collections of limit orders which the exchange aggregates to reconstruct the supply/demand functions. We refer to \cite{mastrolia2025optimal} for an example. These supply/demand functions need not be linear, and in fact that is not even standard in general. Finally and as we have mentioned earlier, our clearing rule recover the one stated in \cite{paul2021optimal} or \cite{mastrolia2025optimal} for linear supply and demand curve as stated in the following proposition. 
\begin{proposition}[Linear supply curve]
\label{prop:linear}
   Assume that $g_{i,t}(p) = K_t^i(p-S_t^i)$. Then there exists a unique clearing price solving \eqref{eq:hyp_cl_auction} and given by  
   \[p = \frac{\sum_{i=1}^{M_{t_{j-1}}} K_{t_{j-1}}^iS_{t_{j-1}}^i + \sum_{s=n+1}^{j-1} (1-\theta_{t_j}^{(s-n)}) K_{t_s}^a S_{t_s}^a +\sum_{\zeta \in \{+,-\}}\sum_{i=1}^{N_{t_{j-1}}^\zeta} \zeta \nu_{t_{j-1}}^{\zeta,i}}{\sum_{i=1}^{M_{t_{j-1}}} K_{t_{j-1}}^i + \sum_{s=n+1}^{j-1} (1-\theta_{t_j}^{(s-n)}) K_{t_s}^a}.\]
\end{proposition}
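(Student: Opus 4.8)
The plan is to exploit the fact that, once all exogenous supply schedules are affine, the clearing equation \eqref{eq:hyp_cl_auction} (and more generally its estimated counterpart \eqref{eq:hyp_cl_auction_estimation}) becomes a single affine equation in the unknown price $p$, so that both existence and uniqueness reduce to checking that the leading coefficient is nonzero, and the closed form is then obtained by a one-line division.

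Concretely, I would substitute $g_{i,t_{j-1}}(p) = K_{t_{j-1}}^i(p-S_{t_{j-1}}^i)$ into \eqref{eq:hyp_cl_auction_estimation}. The left-hand side rewrites as
\[\Big(\sum_{i=1}^{M_{t_{j-1}}} K_{t_{j-1}}^i + \sum_{s=n+1}^{j-1}(1-\theta_{t_j}^{(s)}) K_{t_s}^a\Big)\, p \;-\; \sum_{i=1}^{M_{t_{j-1}}} K_{t_{j-1}}^i S_{t_{j-1}}^i \;-\; \sum_{s=n+1}^{j-1}(1-\theta_{t_j}^{(s)}) K_{t_s}^a S_{t_s}^a \;+\; \sum_{\zeta \in \{+,-\}}\sum_{i=1}^{N_{t_{j-1}}^\zeta} \zeta \nu_{t_{j-1}}^{\zeta,i}.\]
Setting this to zero and collecting the $p$-terms yields precisely the announced expression for $p$ after dividing by the coefficient $\sum_i K_{t_{j-1}}^i + \sum_s (1-\theta_{t_j}^{(s)}) K_{t_s}^a$; uniqueness is automatic because a nondegenerate affine map on $\mathbb{R}$ has exactly one root. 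The specialization $t_j=\tau^\cl$ (so that $t_{j-1}=t_m$, $\theta_{t_j}=\theta_{\tau^\cl}$, $N_{t_{j-1}}^\zeta = N_{t_m}^\zeta$) then reproduces \eqref{eq:hyp_cl_auction}.

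There is essentially no real obstacle here; the only point requiring a word is the non-vanishing of the leading coefficient, i.e. $\sum_{i=1}^{M_{t_{j-1}}} K_{t_{j-1}}^i + \sum_{s=n+1}^{j-1}(1-\theta_{t_j}^{(s)}) K_{t_s}^a > 0$. I would observe that this is exactly the linear-supply instance of the dichotomy of Theorem \ref{th:clearing}: in Case (b) the agent's aggregate slope is already strictly positive, and moreover each linear schedule is Lipschitz with constant $K_{t_{j-1}}^i \le \max_i K_{t_{j-1}}^i$, so Theorem \ref{th:clearing} would apply once this constant is below $1/\lambda_{t_j}$ — but the closed form makes the contraction argument unnecessary; in Case (a) one needs at least one exogenous maker with $K_{t_{j-1}}^i>0$. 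I would therefore simply record the standing non-degeneracy assumption (the total slope is strictly positive) and conclude by the elementary algebra above, with the only bookkeeping being to keep the indices $t_{j-1}$, $\theta_{t_j}$ consistent with the clearing rule when $t_j=\tau^\cl$.
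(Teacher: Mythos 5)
Your proof is correct and is exactly the elementary algebra the paper intends — the paper states Proposition \ref{prop:linear} without any written proof, and your substitute--collect--divide argument supplies precisely those routine details, with the specialization $t_j=\tau^\cl$ recovering \eqref{eq:hyp_cl_auction}. Your observation that the formula requires the aggregate slope $\sum_{i=1}^{M_{t_{j-1}}} K_{t_{j-1}}^i + \sum_{s=n+1}^{j-1}(1-\theta_{t_j}^{(s)})K_{t_s}^a$ to be strictly positive is a genuine (if minor) non-degeneracy condition that the paper leaves implicit and that is worth recording explicitly.
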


From now on, we assume that such an estimated clearing price exists and given as the solution to \eqref{eq:hyp_cl_auction} at any time $t_j$ during the auction.\vspace{0.5em}
  
Note that the executed volume of the agent at the clearing time is given by \[Z_{ \tau^\cl} = \sum_{s= n+1}^{m}\left(1-\theta_{t_{m+1}}^{(s-n)}\right)
K_{t_s}^a\left(S_{\tau^\cl}^\cl - S_{t_s}^a\right)\]
 so that $I_{ \tau^\cl}=I_{ \tau^\op}-Z_{ \tau^\cl}$. Notice that also volume that has been dealed as a buyer will get executed, although the agent is supposed to act as a seller. To account for dealing on the wrong side, the agent will be penalized by receiving a reward penalization on the volumes dealt on the wrong side to compute the objective function in the next section with the Markov Decision Process modeling.

\subsubsection{Projected Hypothetical Clearing Price During the Continuous Session}\label{sec:proj}

During the continuous phase $[0,\tau^\op)$, we assume that the agent is estimating the clearing price of the auction. For that purpose, the agent observes all outstanding (unexecuted) limit orders and treats them as if they were submitted to a fictitious auction, where they would be jointly matched to infer the implied clearing price. The agent is then creating a projected hypothetical clearing price $H_{t}^\cl$ along the duration of the LOB trading before the closing auction starts. The computation is detailed in Algorithm \ref{alg:hyp_clearing_price}, the calibration of the different characteristics of the auction prices and parameters comes from \cite{paul2021optimal}.

\begin{algorithm}[H]
\caption{Computation of $H_{t_i}^\cl$ for $t_i \in \llbracket 0, t_{n} \rrbracket$}
\label{alg:hyp_clearing_price}
\begin{algorithmic}[1]
\REQUIRE Tick size $\alpha>0$, smoothing parameter $\tau\in(0,1]$, initial value $H_0$
\STATE Initialize $H_0^\cl \gets H_0$
\FOR{$i = 1, \dots, n$}
\STATE Record standing orders before time $t_i$ by set $O_i \subseteq (\alpha \mathbb{Z})\times \mathbb{N}$
\IF{$O_i \neq \varnothing$}
\STATE $\mathcal{X}_i \gets \operatorname{proj}_1(O_i)/\alpha$ \COMMENT{Standing price levels}
\STATE $\mathcal{V}_i \gets\operatorname{proj}_2(O_i)$ \COMMENT{Standing volumes}
\FOR{$k\in \mathcal{X}_i$}
      \STATE \(\hat e_i^k \gets \frac{1}{i} \sum_{s = 1}^i \sum_{v \in \mathcal{V}_s} v\mathbf{1}_{O_s}((\alpha k,v))\) \COMMENT{Average volume available at level $k$}
      \STATE \(\hat \varsigma_i^k \gets \frac{1}{i} \sum_{s = 1}^i \left (\sum_{v \in \mathcal{V}_s} v\mathbf{1}_{O_s}((\alpha k,v)) \right)^2\) \COMMENT{Average squared volume available at level $k$}
      \STATE $\hat K_i^k \gets  \max(0,(2\hat e_i^k-\hat \varsigma_i^k/\hat e_i^k)\alpha^{-1})$ \COMMENT{Calibrated slope at level $k$}
\ENDFOR
\IF{$\sum_{k \in \mathcal{X}_i} \hat{K}_i^k > 0$}
\STATE Solve \(\sum_{k \in \mathcal{X}_i}\hat{K}_i^k(\alpha k-p) = 0\) for $p$ and denote the solution $\tilde S_{t_i}$ \COMMENT{Clearing price rule}
\STATE $H_{t_i}^\cl \gets H_{t_{i-1}}^\cl + \tau(\tilde S_{t_i} - H_{t_{i-1}}^\cl)$ \COMMENT{Smoothed update rule}
\ELSE
\STATE $H_{t_i}^\cl \gets H_{t_{i-1}}^\cl$
\ENDIF
\ELSE
\STATE $H_{t_i}^\cl \gets H_{t_{i-1}}^\cl$
\ENDIF
\ENDFOR
\end{algorithmic}
\end{algorithm}

Finally, the agent is penalized for submitting orders below the clearing price estimate $H_{t_i}^\cl$. This is because an order with a price below the hypothetical clearing price would tell the agent to rather wait for the auction to liquidate her shares. This penalty will be detailed in the next section as a penalization for the reward function. 

\section{Markov Decision Process for Optimal Market Making with Closing Auction}\label{sec:MDP}

We now turn to the discretization of the problem. In order to well defined the Markov Decision Process associated to the market modeling, we need to enforce the following assumption for the time grid $(t_j)_{j \in \llbracket 0,n\rrbracket}$ before the closing auction's opening. 

\begin{assumption}
\label{assump:presence}
    For all $j \in \llbracket 1,n \rrbracket$ and $\zeta \in \{+,-\}$, the discretization $(t_j)_{1\leq j\leq n}$ satisfies $N_{t_j}^\zeta > N_{t_{j-1}}^\zeta$ $\mathbb{P}$-almost surely.
\end{assumption}

Let $\mathcal{T} = \{t_i; i \in \llbracket 0, \tau^\cl\rrbracket\}$. In the following, we simplify the notations by replacing $t_i$ with the index $i$ for any $i \in \llbracket 0,m+1 \rrbracket$ so that $n = \tau^\op -1$ and $m = \tau^\cl -1$. Note that this is an abuse of notation since the discretization has to be fixed \textit{a posteriori} of the realization of $N^\zeta$ as stated in Assumption \ref{assump:presence}. This simplifies $\mathcal{T}$ to be $\llbracket 0, m+1 \rrbracket$. During the continuous phase, the agent does not observe the market takers when he submits his orders. For $t\in \mathcal{T}$, after taking action $A_t$ in state $S_t$, a random number of market orders arrive and imply the execution (or not) of the trader's orders (and potentially exogenous orders). Note that by Assumption \ref{assump:presence}, new market takers have arrived at any time $t$ of the continuous phase. It this ensures that any actions taken by the agent will have an impact on the market in the next state.\vspace{0.5em}

The market is modeled by a Markov Decision Process denoted by $X$ and defined for any time $t\in \mathcal T$ as a tuple 
\[X_t = (X_t^1, X_t^2, X_t^3, X_t^4, X_t^5, X_t^6, X_t^7, X_t^8, X_t^9, X_t^{10}, X_t^{11}, X_t^{12}, X_t^{13}, X_t^{14}, X_t^{15}, X_t^{16}, X_t^{17}),\] where each attribute encodes one of the market characteristics before the choice of an actions form the market marker as detailed below.

\paragraph{State Space}
\begin{itemize}
    \item Inventory: $X_t^1 = I_t$ represents the inventory of the market maker at time $t$;
    \item Volume executed at the clearing: $X_t^2 =0$ for $t<\tau^{\cl}$ and $X_{\tau^{\cl}}=Z_{\tau^{\cl}}$; 
    \item Hypothetical/estimated auction's clearing price: $X_t^3 = H_t^\cl$ represents the hypothetical clearing price as defined in Section \ref{sec:proj} during the continuous trading phase for $t<\tau^{\op}$ or the estimated clearing price as defined in Section \ref{sec:clearing} as the solution to Equation \eqref{eq:hyp_cl_auction_estimation} during the auction trading phase for $\tau^{\op}\leq t<\tau^{\cl}$;
    \item Limit order book depth: $X_t^4 = L_t^+ \mathbf{1}_{\{0 \leq t \leq \tau^\op-1\}}$ and $X_t^5 = L_t^- \mathbf{1}_{\{0 \leq t \leq \tau^\op-1\}}$ represent respectively the depth in the limit order book on the ask (resp. bid) side;
    \item Number of limit order in the auction: $X_t^6 = M_t\mathbf{1}_{\{t \geq \tau^\op\}}$;
    \item Number of investors in the auction: $X_t^7 = N_{t-}^+ \mathbf{1}_{\{t \geq \tau^\op\}}$ and $X_t^8 = N_{t-}^-\mathbf{1}_{\{t \geq \tau^\op\}}$ represents respectively the number of aggressive order sent in the auction to buy (resp. sell) the asset;
    \item Cancellation history: $X_t^9 = \theta_t \mathbf{1}_{\{t \geq \tau^\op\}}$ represent the vector of canceled orders in the auction up to time $t$;
    \item Mid price: $X_t^{10} = S_t^\mathrm{mid}$
    \item Volume sent by investors in the auction: $X_t^{11} = (\nu_{t}^{+,i}\mathbf{1}_{\{i \leq N_t^+\}}\mathbf{1}_{\{t \geq \tau^\op\}})_{1 \leq i \leq \mathcal N}$ and $X_t^{12} = (\nu_{t}^{-,i}\mathbf{1}_{\{i \leq N_t^-\}}\mathbf{1}_{\{t \geq \tau^\op\}})_{1 \leq i \leq \mathcal N}$ represent the number of aggressive orders sent in the auction to buy (resp. sell) the asset
    \item Volume in the limit order book (ask/bid side): $X_t^{13} = (V_t^{+,j}\mathbf{1}_{\{j \leq L_t^+\}}\mathbf{1}_{\{t \leq \tau^\op-1\}})_{1 \leq j \leq \mathcal L}$ and $X_t^{14} = (V_t^{-,j}\mathbf{1}_{\{j \leq L_t^-\}}\mathbf{1}_{\{t \leq \tau^\op-1\}})_{1 \leq j \leq \mathcal L}$ are the volume existing in the limit order book on the ask and bid side at any depth
    \item Limit order in the auction: $X_t^{15} = ((K_t^i, S_t^i))_{1 \leq i \leq \mathcal N}$, with $K_t^i = S_t^i=0$ if $t\leq \tau^\op-1$ or $i > M_t$;
    \item Price history in the auction: $X_t^{16} = S^a(t)$ where $S^a(t):=(S_{\tau^\op}^a, \ldots, S^a_{t-1},0,\ldots, 0)$ represents the vector of limit order prices submitted in the auction up to time $t$;
    \item Supply/Demand slope history: $X_t^{17} = K^a(t)$ where $K^a(t):=(K_{\tau^\op}^a,\ldots, K_{t-1}^a,0,\ldots, 0)$ represents the slope of the limit order submitted up to time $t$.
\end{itemize}

\begin{remark}
    While exchanges like the NYSE usually only display statistics on the total imbalance between buy and sell orders combined with an indicative price, it does not display all the orders that have been submitted at each price. Our construction for the state vector $X_t$ may therefore appear counterfactual. The fact that the agent is able to reconstruct the state vector from the information published by the exchange is a simplifying assumption in this work.
\end{remark}

\begin{remark}
    We assume that all numbers of market participants are bounded by $\mathcal{N}>0$, the limit order book depth is bounded by $\mathcal{L}>0$, all volumes are bounded by $\mathcal{V} > 0$, all prices are bounded by $\alpha \mathcal{B}$ for $\mathcal{B}>0$. Furthermore, all slopes (\textit{i.e.}, the $K^a (t)$) lie on a grid with step $\beta$ by $\beta\mathcal{K}$ for some $\mathcal{K} > 0$.
    We chose the same bound $\mathcal{N}>0$ (resp. $\mathcal{V}>0$) on the number of (resp. volumes submitted by) market participants for both market makers and investors during the continuous phase and the auction. While one could choose different constants for each type of market participant, we chose the bounds to be large enough to bound all quantities.
\end{remark}

\begin{remark}
Recall that the strategic market maker is assumed to submit a linear supply/demand market order of the form
$
K_t^a (p - S_t^a),
$
into the auction, as a function of the clearing price $p$ at time $t$. The other limit orders are characterized by general supply/demand functions
$
g_{i,t}(p) = K_t^i (p - S_t^i),
$
where $K_t^i$ and $S_t^i$ denote, respectively, the slope and the reference price of other agent $i$’s order. In the case of linear supply/demand functions for the other limit orders, the state variable
$
X_t^{15} := \big( (K_t^i, S_t^i) \big)_{1 \le i \le \mathcal N}
$
collectively represents the slopes and reference prices submitted by the other market participants.
 
\end{remark}

\begin{remark}
Note that $X^{16},X^{17}$ are vectors of size $m-n$ with $0$ components after time $t$. This is due to the fact that we require a fixed length on the state attribute independent of the time $t$ studied. 
\end{remark}

{Notice that $K_{t_s}^a \in \beta \llbracket 0, \mathcal{K} \rrbracket$ and $S_{t_s}^a, S_{\tau^\cl}^\cl \in \alpha \llbracket 0, \mathcal{B}\rrbracket$ so $|K_{t_s}^a (S_{\tau^\cl}^\cl-S_{t_s}^a)|\le \alpha \beta \mathcal{K}\mathcal{B}$ which yields the bound $|Z_{\tau^\cl}| \le (m-n)\alpha \beta \mathcal{K}\mathcal{B}=: Z_\infty$. Similarly, $I_t \in [0,I_0]$ for $t < \tau^\cl$ and $I_{\tau^\cl} = I_{\tau^\op} - Z_{\tau^\cl}$ yielding $-Z_\infty \le I_t \le I_0+Z_\infty$ for all $t$. Volumes and inventories now lie on a grid with step $\alpha\beta$. This defines the non-empty and finite state space
\begin{align*}
    \mathcal{X} &= (\alpha\beta\mathbb{Z}\cap [- Z_\infty, I_0 +Z_\infty ]) \times (\alpha\beta\mathbb{Z}\cap [-Z_\infty, Z_\infty ]) \times (\alpha \llbracket 0,\mathcal B \rrbracket)\\
    &\quad \times \llbracket 0, \mathcal{L} \rrbracket^2\times \llbracket 0, \mathcal N \rrbracket^3 \times \{0,1\}^{m-n} \times (\alpha \llbracket 0,\mathcal B \rrbracket) \times (\alpha\beta \mathbb{Z}\cap [ 0, \mathcal V ])^{2\mathcal N} \times (\alpha\beta \mathbb{Z}\cap[ 0, \mathcal V ])^{2\mathcal{L}}\\ &\quad\times [(\beta \llbracket 0,\mathcal K \rrbracket) \times (\alpha \llbracket 0,\mathcal B \rrbracket)]^{\mathcal N} \times (\alpha \llbracket 0,\mathcal B \rrbracket)^{m-n} \times (\beta \llbracket 0,\mathcal K \rrbracket)^{m-n}.
\end{align*}

We formulate the control problem under the assumption that the complete state vector $X_t$ is Markov. However, letting $q_\mathcal{V} = |\alpha\beta \mathbb{Z}\cap [ 0, \mathcal V ]|$ one sees that
\[|\mathcal{X}| =\left (\frac{I_0+2Z_\infty}{\alpha\beta}+1 \right) \left (\frac{2Z_\infty}{\alpha\beta}+1 \right) (\mathcal{L}+1)^2(\mathcal{N}+1)^3 2^{m-n} q_\mathcal{V}^{2(\mathcal{N}+\mathcal{L})}(\mathcal{B}+1)^{\mathcal{N}+m-n+2}(\mathcal{K}+1)^{\mathcal{N}+m-n}.\]
One sees that $|\mathcal{X}| \geq 2^{m-n}q_\mathcal{V}^{2\mathcal{L}} = 2^{30}901^{24}$ with $\alpha\beta = 1/30$, $\mathcal{V}=30$, $m-n=30$ and $\mathcal{L}=12$ as used in the numerical simulations per Table \ref{tab:params_generative}. A tabular representation is infeasible, which motivates our introduction of a pruned state for numerical simulations. To improve practical scalability and reduce the input dimension for numerical simulations, the networks act on a phase-specific and approximate RL
state rather than the full $X^1,\dots,X^{17}$: \[X_t^\text{LOB} = (I_t, H_t^\cl, L_t^+, L_t^-, S_t^\mathrm{mid}, V_t^{+,1}, V_t^{-,1}) \quad \text{and} \quad X_t^\text{auction} = (I_t, H_t^\cl, S_t^\mathrm{mid}, M_t, N_t^+, N_t^-)\]
with the associated pruned state spaces of cardinality
\[|\mathcal{X}^\text{LOB}| = \left (\frac{I_0}{\alpha\beta}+1 \right)(\mathcal{B}+1)^2 (\mathcal{L}+1)^2q_\mathcal{V}^2 \quad \text{and} \quad |\mathcal{X}^\text{auction}|= \left (\frac{I_0}{\alpha\beta}+1 \right)(\mathcal{B}+1)^2 (\mathcal{N}+1)^3.\]
The feature reduction removes the principal sources of combinatorial growth. Note that $S_t^\mathrm{mid}$ is constant during the auction and that it is determined by the top-of-order-book volumes $V_t^{\pm,1}$ under the geometric-volume assumption we consider throughout the numerical applications. We do not claim that this state is Markov or sufficient for the complete simulator state. The learned policies are approximate rather than globally optimal for the full-state MDP.

\paragraph{Action Space} We now turn to the actions of the strategic market maker. Given a state vector $X_t\in \mathcal X$, we define the action vector $A_t$ as 
\[A_t = (A_t^1, A_t^2, A_t^3, A_t^4, A_t^5)\]
where each component represents a particular action.
\begin{itemize}
    \item Volume set in the limit order: $A_t^1 = v_t \mathbf{1}_{\{0 \leq t \leq \tau^\op-1\}}$;
    \item Depth in the limit order book: $A_t^2 =k_t\mathbf{1}_{\{0 \leq t \leq \tau^\op-1\}}$;
    \item Supply/demand slope and reference price: $A_t^3 =K_t^a\mathbf{1}_{\{\tau^\op \leq t \leq \tau^\cl-1\}}$;
   and $A_t^4 = S_t^a\mathbf{1}_{\{\tau^\op \leq t \leq \tau^\cl-1\}}$
    \item Order cancellation in the auction: $A_t^5 =c_t\mathbf{1}_{\{\tau^\op \leq t \leq \tau^\cl-1\}}$.
\end{itemize}
This defines the action space $\mathcal{A}$ as
\[\mathcal{A} = \llbracket 0, \mathcal{V} \rrbracket \times \llbracket 0,\mathcal{L} \rrbracket \times (\beta\llbracket 0, \mathcal K \rrbracket) \times (\alpha \llbracket 0,\mathcal B \rrbracket) \times \{0,1\}\]

Recalling that on the limit order book the market maker is liquidating his inventory, hence submits a volume $A_t^1 \leq I_t = X_t^1$, at a price $\alpha A_t^2 \geq S_t^\mathrm{mid}$. During the auction phase, the market makers can cancel previous orders exactly once, thus $A_t^5 = c_t \leq \max_{1\le i \le t-\tau^\op}(1-\theta_t^{(i)}) = :C_t$ with the convention that $\max_\varnothing = 0$ (for $t\le \tau^\op$). Similarly, let $C(x) = \max_{1 \le i \le m-n}\{(1-x^9_i)\mathbf{1}_{\{x_i^{17}>0\}}\}$ for a state $x\in \mathcal{X}$.
\begin{definition}[Admissible actions]
    Given a state $x$, the set $\operatorname{Adm}(x)$ of admissible actions is defined as
    \[\operatorname{Adm}(x) = \{a\in \mathcal{A}: a^1 \leq x^1, a^2 \geq x^{10}\alpha^{-1}, a^5  \le C(x)\}\]

\end{definition}

\begin{definition}[Admissible policies]
    An admissible policy is a map $\pi\colon x \in  \mathcal{X}\mapsto \pi(\cdot | x) \in \mathcal{P}(\operatorname{Adm}(x))$, where $\mathcal{P}(\operatorname{Adm}(x))$ is the set of probability measures over $\operatorname{Adm}(x)$. We denote $\Pi$ the set of these admissible policies. We define the set of greedy policy by the set of map $\pi:\mathcal X\longrightarrow \mathcal A$ denoted by $\Pi^g$.
\end{definition}

\paragraph{Reward} We define the reward on three separated region as explained below.

\begin{enumerate}
    \item During the continuous trading session for $t<\tau^{\op}$. The market maker submit a price $S_t^\bullet=\alpha A_t^2$. The volume executed is given by $E_t$. The profit is thus given by $\alpha A_t^2\times E_t$. We moreover assume that the market maker penalizes the execution by comparing the price executed with the hypothetical clearing price $H_t^{\cl}=X_t^3$. If $S_t^\bullet>H_t^{\cl}$, the market maker receives the full profit otherwise if $S_t^\bullet<H_t^{\cl}$ the market maker may regret the execution. We assume that the difference between $X^3$ and $S^\bullet$ tolerated is given by $k^*\alpha$ for some $k^*$ fixed. It means that as soon as 
    \[|H_t^{\cl}-S_t^\bullet|\leq k^*\alpha,\]
the market maker still get a profit from the execution on the limit order. We thus introduce a penalty function $f^c:\mathbb R\longrightarrow \mathbb R$ convex, continuous and increasing such that $f^c$ is zero on $\mathbb{R}_-$, such that the reward of the market maker is given by 
\[r_t(X_t,A_t)= S_t^\bullet E_t f^c(k^* \alpha -( H_t^\cl -S_t^\bullet)).\]
\item During the auction trading session for $\tau^{\op}\leq t<\tau^{\cl}$. The market maker submits a slope $K_t^a = A_t^3$ and a price $S_t^a = A_t^4$. The agent receives a ``fictive'' reward $K_t^aH_t^\cl(H_t^\cl-S_t^a)$, where $H_t^\cl$ is the anticipated clearing price (were the auction to close at time $t$. The agent is penalized for canceling previous orders at cost $d_t$ at time $t$, yielding a penalty $-d_tc_t$. Finally, the agent is penalized for dealing as a buyer while he is supposed to be a seller. This happens when $H_t^\cl \leq S_t^a$: the market maker is willing to buy $K_t^a(S_t^a-H_t^\cl)$ shares at price $H_t^\cl$ or below. We introduce a penalty function $f^a:\mathbb R\longrightarrow \mathbb R$ concave, continuous and increasing such that $f^a$ is zero on $\mathbb{R}_+$, such that the penalty writes $f^a(K_t^a H_t^\cl(H_t^\cl-S_t^a))$. The reward of the market maker is given by
\[r_t(X_t,A_t) = K_t^a H_t^\cl (H_t^\cl-S_t^a) + f^a(K_t^a H_t^\cl(H_t^\cl-S_t^a)) - d_tc_t.\]
The ``fictive'' reward received throughout the auction is however not totally fictive. In fact, it corresponds to an actual economic objective. First, some exchanges like for example the Singapore Exchange SGX, the Cboe RM Integrated Book or the German Xetra use randomization of the auction's clearing time to avoid snipping strategy \cite{mastrolia2024clearing}. The exchange looks to discourage market participants from sniping at the end of the auction. The fictive reward can thus be seen as a rebate proposed by the exchange for shaping the clearing price.
\item Final reward at the clearing for $t=\tau^{\cl}$. At the clearing time, the clearing price $S_{\tau^\cl}^\cl$ is determined and order get matched. The market maker makes the profit or loss
\[\sum_{s= n+1}^{m}\left [
K_{t_s}^aS_{\tau^\cl}^\cl\left(S_{ \tau^\cl}^\cl - S_{t_s}^a\right)\left(1-\theta_{ \tau^\cl}^{(s-n)}\right)\right]\]
based on the orders he sent to the market and did not cancel by the clearing time. The agent is furthermore penalized for holding inventory. We introduce $\lambda > 0$ as a penalization parameter. Furthermore, the agent is again penalized for wrong-side dealing. The final reward of the market maker is given by
\begin{align*}
    r_t(X_t,A_t) &= \sum_{s= n+1}^{m}\left [
K_{t_s}^aS_{\tau^\cl}^\cl\left(S_{ \tau^\cl}^\cl - S_{t_s}^a\right)\left(1-\theta_{ \tau^\cl}^{(s-n)}\right)\right]- \lambda |I_{\tau^\cl}|^2\\
&\quad + \sum_{s= n+1}^{m}f^a\left (
K_{t_s}^aS_{\tau^\cl}^\cl\left(S_{ \tau^\cl}^\cl - S_{t_s}^a\right)\left(1-\theta_{ \tau^\cl}^{(s-n)}\right)\right).
\end{align*}
\end{enumerate}

\begin{remark}
    In the numerical part we will choose $f^c (t) = \frac{1}{k^* \alpha}(t)_+$ and $f^a(t) = -q(-t)_+$ for some $q>0$. One can interpret the penalty as removing a fraction $q$ of the reward. With $q = 1$, one obtains no reward for dealing on the wrong side
\end{remark}

To summarize, at time $t\in \mathcal{T}$ the random one-step reward is
\[r_t(X_t,A_t)=
\begin{cases}
\displaystyle
S_t^\bullet E_t f^c(k^* \alpha -( H_t^\cl -S_t^\bullet)), & 
\text{if } t< \tau^\op,\\
K_t^a H_t^\cl (H_t^\cl-S_t^a) + f^a(K_t^a H_t^\cl(H_t^\cl-S_t^a)) - d_tc_t, & \text{if }
 \tau^\op\leq t< \tau^\cl,\\
\sum_{s= n+1}^{m}\left [
K_{t_s}^aS_{\tau^\cl}^\cl\left(S_{ \tau^\cl}^\cl - S_{t_s}^a\right)\left(1-\theta_{ \tau^\cl}^{(s)}\right)\right]- \lambda |I_{\tau^\cl}|^2, & 
\text{if } t= \tau^\cl.\\
+ \sum_{s= n+1}^{m}f^a\left (
K_{t_s}^aS_{\tau^\cl}^\cl\left(S_{ \tau^\cl}^\cl - S_{t_s}^a\right)\left(1-\theta_{ \tau^\cl}^{(s)}\right)\right)
\end{cases}
\]

In our setting, the agent chooses action $A_t$ in state $X_t$. Then, the executed volume $E_t$ is randomly observed. Finally, the agent transitions into state $X_{t+1}$.\vspace{0.5em}

The objective function of the strategic market maker is to maximize, over all $\pi \in \Pi$, the total expected reward, \textit{i.e.} to solve
\begin{align*}
    (\mathbf{P})\quad \operatorname{maximize} \quad & J(\pi) = \mathbb{E}\left [\sum_{t \in \mathcal{T}} \chi^t r_t(X_t,A_t)\right] \nonumber \\
    \text{s.t.} \quad & \left \{\begin{array}{ll}
    \pi \in \Pi \\
    X_0 \sim \mu_0  \\
    A_t \sim \pi(\cdot \mid X_t)
\end{array}\right.,
\end{align*}
where $\chi\in (0,1]$ denotes a discount factor. 
We also define the problem reduced to greedy policies:

  \begin{align*}
  \mathbf{(P^g)} \quad    \operatorname{maximize} \quad & J(\pi) = \mathbb{E}\left [\sum_{t \in \mathcal{T}} \chi^tr_t(X_t,A_t)\right] \nonumber \\
    \text{s.t.} \quad & \left \{\begin{array}{ll}
    \pi \in \Pi^g \\
    X_0 \sim \mu_0  \\
    A_t = \pi( X_t)
\end{array}\right.
\end{align*}

We recall that our system is finite in the sense that the number of state, actions, rewards are finite, so that we can reduce our study to greedy policy $\pi^g\in\Pi^g$.

\begin{proposition}[Theorem 6.2.10 in \cite{puterman2014markov}]
$\mathbf{(P)}$ is equivalent to solve $\mathbf{(P^g)}$, that is there exists a greedy policies which is optimal in the set $\Pi$.
\end{proposition}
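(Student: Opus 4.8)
The plan is to reduce the statement to the classical dynamic programming principle for finite Markov Decision Processes, which is exactly the content of \cite[Theorem 6.2.10]{puterman2014markov}; I only need to check that our model meets its hypotheses and then record how the greedy selector is extracted.

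First I would observe that the data $\big(\mathcal X, (\operatorname{Adm}(x))_{x\in\mathcal X}, (r_t)_{t\in\mathcal T}, \text{transition kernels}\big)$ define a \emph{finite} MDP: $\mathcal X$ is finite, each admissible set $\operatorname{Adm}(x)\subseteq\mathcal A$ is finite, the horizon $\mathcal T=\llbracket 0,m+1\rrbracket$ is finite, and the one-step rewards $r_t(x,a)$ are bounded (every inventory, volume, price and slope ranges over a finite grid, so the explicit formulas for $r_t$ take finitely many values). Since the dynamics and the reward are time-inhomogeneous — the continuous phase, the auction phase and the clearing step obey different transition rules and reward formulas — I would either augment the Markov state with the current time index, or, as is already implicitly done through the phase indicators $\mathbf 1_{\{t\geq\tau^\op\}}$, the growing cancellation vector $\theta_t$, and so on, note that the phase is recoverable from $X_t$, so that a map $\pi\colon\mathcal X\to\mathcal A$ does specify a well-defined control at every time. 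This is what makes $\Pi^g$ a legitimate subclass of $\Pi$.

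Next I would introduce the finite-horizon value functions by backward recursion: set $V_{m+2}\equiv 0$ and, for $t$ running from $m+1$ down to $0$,
\[
V_t(x) = \max_{a\in\operatorname{Adm}(x)}\Big\{\mathbb E\big[r_t(X_t,A_t)\,\big|\,X_t=x,A_t=a\big] + \chi\,\mathbb E\big[V_{t+1}(X_{t+1})\,\big|\,X_t=x,A_t=a\big]\Big\}.
\]
Because $\operatorname{Adm}(x)$ is a nonempty finite set the maximum is attained; choosing any maximiser and calling it $\pi^g_t(x)$ defines a greedy policy $\pi^g\in\Pi^g$. Two verifications then close the argument. First, for every $\pi\in\Pi$ a backward induction on $t$ using the tower property gives $\mathbb E^\pi\big[\sum_{s=t}^{m+1}\chi^{s}r_s(X_s,A_s)\mid X_t\big]\leq\chi^{t}V_t(X_t)$: at each step the expectation over the (possibly randomised) action $A_t\sim\pi(\cdot\mid X_t)$ is a convex combination of the bracketed quantities above and is therefore $\leq V_t(X_t)$, and one invokes the induction hypothesis for $V_{t+1}$; taking $t=0$ and averaging under $\mu_0$ yields $J(\pi)\leq\mathbb E_{X_0\sim\mu_0}[V_0(X_0)]$. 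Second, the same induction run with equalities for the deterministic policy $\pi^g$ gives $J(\pi^g)=\mathbb E_{X_0\sim\mu_0}[V_0(X_0)]$. Combining the two, $\pi^g$ is optimal in $\Pi$, which is the claim.

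The argument is entirely routine; the only point that needs care is the bookkeeping in the second paragraph, namely presenting the time-inhomogeneous session as a genuine MDP so that the notion of a greedy policy ($\mathcal X\to\mathcal A$) is meaningful and $\Pi^g\subseteq\Pi$. Once the state — or the state–time pair — is fixed as the Markov state, everything else is the standard finite-MDP dynamic-programming verification recorded in \cite[Theorem 6.2.10]{puterman2014markov}.
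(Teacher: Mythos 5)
Your proposal is correct and follows the same route as the paper, which simply invokes the classical finite-MDP dynamic-programming result (Theorem 6.2.10 of Puterman) without reproducing the argument; your backward-induction verification of the hypotheses (finite state and action sets, finite horizon, bounded rewards) and of the optimality of the greedy selector is exactly the content that citation encapsulates. Your care in flagging the time-inhomogeneity — that a greedy policy $\pi\colon\mathcal X\to\mathcal A$ is only meaningful once the time index is either appended to the state or recoverable from it — is a worthwhile addition, since the paper performs this state augmentation only later, when it sets $\tilde{\mathcal X}=\mathcal X\times\mathcal T\setminus\{\tau^{\cl}\}$ for the neural parametrization.
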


\section{Learning Market Making with Closing Auction in an Unknown Environment}
\label{sec:learning}

In this section, we explain the numerical method used to solve $(\mathbf{P^g})$.

\subsection{Problem Formulation}

We consider the online episodic RL setting. In this setting, the agent executes the MDP sequentially for $E$ episodes and we denote the total number of episode samples by $T$. The environment being unknown, we do not deterministically know the rewards, the initial distribution, and the transition probabilities. We will therefore use a model-free method to find the optimal policy $\hat\pi$ maximizing $J$. The idea is to approximate the optimal Q-function, see \cite{watkins1992q}. The optimal Q-function is denoted $Q_t^*(x,a)$ and defined as
\[\forall (x,a) \in \mathcal{X}\times \mathcal{A},\quad Q_t^*(x,a) = \max_\pi \mathbb{E}_\pi\left [ \sum_{s =t}^{\tau^\cl} \chi^{s-t} r_s(X_s,A_s) \mid X_t = x, A_t = a\right ].\]

The conceptual starting point is the classical tabular Q-learning algorithm.

\begin{algorithm}[ht]
\caption{Q-learning}
\label{alg:q-learning}
\begin{algorithmic}[1]
\REQUIRE Number of episodes $E$, learning rate schedule $(\eta_k)_{k \in \mathbb{N}}$, exploration parameter $\varepsilon$ and discount factor $\chi$
\STATE Initialize $Q_t(x,a)$ arbitrarily for all $t\in \mathcal{T}$ and admissible $(x,a)\in \mathcal{X}\times \mathcal{A}$
\FOR{$e=1,\ldots,E$}
\STATE Observe initial state $x_{0,e}$
\FOR{$t=0,\ldots,\tau^\cl-1$}
\STATE Select $a_t \in \mathrm{Adm}(x_t)$ (according to an $\varepsilon$-greedy rule), observe $r_t$ and $x_{t+1}$
\STATE Let $k$ the number of times action $a_t$ has been taken in state $x_t$ at time $t$ so far and update
\[Q_t(x_t,a_t) \gets (1-\eta_k)Q_t(x_t,a_t)+\eta_k \left(r_t+ \chi \max_{a\in \mathrm{Adm}(x_{t+1})} Q_{t+1}(x_{t+1},a)\right)\]
\ENDFOR
\ENDFOR
\RETURN $\hat{\pi}_t(x) \in \arg\max_{a\in \mathrm{Adm}(x)} Q_t(x,a)$ for all $t\in \mathcal{T}\setminus \{\tau^\cl\}$ and $x\in \mathcal{X}$
\end{algorithmic}
\end{algorithm}

As shown in \cite{watkins1992q}, as soon as all the rewards $r_t$ are bounded, all admissible state-action pairs are sufficiently explored and the learning rates $\eta_k \in [0,1)$ satisfy
\[\sum_{k=1}^{+\infty}|\eta_{k}|=+\infty \quad \text{and} \quad \sum_{k=1}^{+\infty}\eta_{k}^2<+\infty,\]
then the Q-learning algorithm is converging towards the optimal Q-function $Q_t^*(x,a)$.

\subsection{Deep Q-Network}
\label{subsec:dqn}

The classical Q-learning algorithm would fill a table with $\tau^\cl\times |\mathcal{X}| \times |\mathcal{A}|$ values to approximate the optimal Q-values $Q_t(x,a)$ for all $(x,a)\in \mathcal{X}\times \mathcal{A}$ and $t\in \mathcal{T}.$ Given the size of our state space, this is extremely expensive. First, we render the problem stationary by enriching the state space as $\mathcal{S} = \mathcal{X}\times (\mathcal{T}\setminus \{\tau^\cl\})$ and by writing $s = (x,t)$ for $x\in \mathcal{X}$ and $t\in \mathcal{T}\setminus \{\tau^\cl\}$. Let $Q(s,a) = Q_t(x,a)$. We therefore have recourse to neural networks the DQN method, which consists in approximating $Q(s,a)$ with a neural network $Q_\theta(s,a)$, for some weight $\theta \in \mathbb{R}^q$, where $q \geq 1$. \vspace{0.5em}

Our setting is organized in two phases. We define a separate neural network for each phase:
\[\forall (x,a,t)\in \mathcal{X}\times \mathcal{A}\times \mathcal{T}\setminus \{\tau^\cl\},\quad Q_{\theta}(s,a) = Q_\phi(s,a) \mathbf{1}_{\{t<\tau^\op\}}+Q_\psi(s,a) \mathbf{1}_{\{t\geq\tau^\op\}}\] where $\theta = (\phi,\psi)$ and $\phi\in \mathbb{R}^{q_1},\psi \in \mathbb{R}^{q_2}$ and $q_1 + q_2 = q$. The terminal Q-function is given by the reward $r_{\tau^\cl}$ and we do not need to define a neural network. We train by Q-learning with experience replay and target networks (DQN, \cite{mnih2015human}) using the Double-DQN target of \cite{van2016deep}, and ensuring junction at $t=\tau^\op$ when the phase switch occurs.

  \begin{algorithm}[ht]
  \caption{Double DQN with phase-specific networks}
  \label{alg:dqn}
  \begin{algorithmic}[1]
  \REQUIRE Episodes $E$, two phase buffers $\mathcal{D}_\phi,\mathcal{D}_\psi$ each of capacity $M$,
  warm-up size $N_w$ (per buffer), minibatch size $B$, discount $\chi$, reward scale $c_r$,
  Polyak coefficient $\tau_{\mathrm{targ}}\in[0,1)$, exploration schedule $(\varepsilon_e)_{e\ge1}$
  \STATE Initialize $\theta=(\phi,\psi)$, targets $\theta^-=(\phi^-,\psi^-)\gets\theta$, empty buffers $\mathcal{D}_\phi,\mathcal{D}_\psi$ both of capacity $M$
  \FOR{$e=1,\ldots,E$}
  \STATE Observe an initial state $x_{0,e}$ and let $s_{0,e}=(x_{0,e},0)$
  \FOR{$t=0,\ldots,\tau^\cl-1$}
  \STATE Select an action $a_t$ according to
  \[a_t \sim \left\{\begin{array}{ll}
      \mathcal{U}(\mathrm{Adm}(s_t)) & \text{with probability } \varepsilon_e \\
      \arg\max_{a\in \mathrm{Adm}(s_t)} Q_\theta(s_t,a) & \text{with probability } 1-\varepsilon_e
  \end{array}\right.\]
  and observe $r_t$, the next state $s_{t+1}$ when $t+1<\tau^\cl$, the terminal reward $r_{\tau^\cl}(X_{\tau^\cl})$ when $t+1=\tau^\cl$, set $d_t=\delta_{t+1,\tau^\cl}$ and $g_t=d_t\,r_{\tau^\cl}(X_{\tau^\cl})$
  \STATE Store $(s_t,a_t,c_r r_t,s_{t+1},d_t,c_r g_t)$ in $\mathcal{D}_\phi$ if $t<\tau^\op$, else in $\mathcal{D}_\psi$
  \FOR{each phase $p\in\{\phi,\psi\}$ with $|\mathcal{D}_p|\ge N_w$}
  \STATE Sample a minibatch $\mathcal{B}=((s_j,a_j,r_j,s_j',d_j,g_j))_{1\le j\le B}$ uniformly from $\mathcal{D}_p$
  \STATE Set $a_j^*\in\arg\max_{a\in\mathrm{Adm}(s_j')}Q_\theta(s_j',a)$ and
  $y_j=r_j+\chi d_j g_j+\chi(1-d_j)Q_{\theta^-}(s_j',a_j^*)$
  \STATE Let $\ell(u)=\tfrac12 u^2\mathbf{1}_{[-1,1]}(u)+(|u|-\tfrac12)(1-\mathbf{1}_{[-1,1]}(u))$
  and update $\theta_p$ by one Adam step on
  \[L_\mathcal{B}(\theta_p)=\frac1B\sum_{j=1}^B \ell\left(Q_{\theta_p}(s_j,a_j)-y_j\right)\]
  \STATE $\theta_p^-\gets(1-\tau_{\mathrm{targ}})\theta_p^- + \tau_{\mathrm{targ}}\theta_p$
  \ENDFOR
  \ENDFOR
  \ENDFOR
  \RETURN $\hat{\pi}(s)\in\arg\max_{a\in\mathrm{Adm}(s)}Q_\theta(s,a)$ for $s\in\mathcal{S}$
  \end{algorithmic}
  \end{algorithm}

\begin{remark}
The number of decision times $n$ during the CLOB phase is random with inter-decision times $\hat{t}_{i+1}-\hat{t}_i$ that need not equal one time unit, so the number of continuous-phase decisions is itself random. Since by our previous simplification, we identify the time index with a discrete index set, the fact that Bellman targets are computed by treating $\chi$ as per-transition discount is a simplifying approximation. Finally, $T$ is also dependent on the number of decision times during the CLOB phase for each episode.
\end{remark}

\subsection{Continuous-Action Market Making}

The Deep Q-Network approach described above is adapted to the discrete action space of the market making problem. In this work, DQN is viewed as the discrete-RL baseline method. However, in order to compare this discrete-control method with standard continuous-control algorithms, we also consider a continuous relaxation of the action space.\vspace{0.5em}

In the model above, we consider discrete action sets. This is naturally adapted for variables like a price, who is submitted according to a tick size anyway. For other variables however, like the supply/demand slope during the auction, it may be more effective to allow continuous actions. We therefore want to compare the baseline DQN method to different continuous-action methods. For each state $s \in \mathcal{S}$, we introduce a relaxed continuous box of actions $\mathcal{A}^{\mathrm{rel}}(s)\supseteq \mathrm{Adm}(s)$. Before a relaxed action is ``sent'' to the environment, it is transformed into an admissible market making action through a projection operator $\Gamma_s : \mathcal{A}^{\mathrm{rel}}(s) \longrightarrow \mathrm{Adm}(s)$. Informally, for $s \in \mathcal{S}$ and an action $u\in \mathcal{A}^{\mathrm{rel}}(s)$, the environment executes $a= \Gamma_s(u)$, the replay buffer stores $u$ and the critic is trained on $u$. The continuous-control algorithms interact with the same constrained market making environment as DQN, but they optimize over a continuous parametrization of the action. \vspace{0.5em}

This relaxation creates an induced continuous-action control problem. The resulting action-value function may be non-smooth, since $\Gamma_s$ need not be smooth. Nevertheless, the setting remains compatible with model-free actor-critic methods, which only require sampled transitions from the environment. We consider three off-policy actor-critic algorithms: Deep Deterministic Policy Gradient (DDPG), Twin Delayed DDPG (TD3), and Soft Actor-Critic (SAC), see respectively \cite{lillicrap2020continuous}, \cite{fujimoto2018addressing}, and \cite{haarnoja2018soft}. DDPG uses a deterministic actor with exploration noise. TD3 adds twin critics, target-policy smoothing, and delayed actor updates. Finally, SAC uses a stochastic actor and entropy regularization. These methods are natural continuous-control counterparts to Q-learning and they all learn both the critic $Q_\theta$ and an actor $\mu_\omega \colon s\in \mathcal{S}\longmapsto \mu_\omega(s)\in \mathcal A^{\mathrm{rel}}(s)$ with weights $\omega \in \mathbb{R}^l$ with $l\ge 1$ (such that $\Gamma_s \circ \mu_\omega(s) \in \mathrm{Adm}(s)$). Note that SAC's actor is stochastic, that is $\pi_\omega(\cdot\mid s) \in \mathcal{P}(\mathcal A^{\mathrm{rel}}(s))$, and not a point map. The main difference with DQN is that the maximization over actions in the Bellman target is no longer computed by enumerating all admissible actions. Instead, the next action is produced by a learned policy. We refer to \cite{achiam2018spinningup} for details on each of DDPG, TD3 and SAC and summarized in the following parameter table used for all numerical simulations.

  \begin{table}[ht!]
  \centering
  \renewcommand{\arraystretch}{1}
  \begin{tabular}{l c c c c}
  \toprule
  \textbf{Feature} & \textbf{DQN} & \textbf{DDPG} & \textbf{TD3} & \textbf{SAC} \\
  \midrule
  \midrule
  \multicolumn{5}{l}{\emph{Shared}}\\
  Discount $\chi$                      & \multicolumn{4}{c}{$0.99$} \\
  Replay capacity $M$             & \multicolumn{4}{c}{$50{,}000$} \\
  Learning starts $N_w$       & \multicolumn{4}{c}{$5{,}000$} \\
  Minibatch size $B$                   & \multicolumn{4}{c}{$128$} \\
  Optimizer                            & \multicolumn{4}{c}{Adam} \\
  Hidden layers (ReLU)                 & \multicolumn{4}{c}{$2\times 64$} \\
  Gradient-norm clip                   & \multicolumn{4}{c}{$1.0$} \\
  Reward scale $c_r$                   & \multicolumn{4}{c}{$10^{-3}$} \\
  Gradient steps per env.\ step        & \multicolumn{4}{c}{$1$} \\
  \midrule
  \multicolumn{5}{l}{\emph{Learning rates}}\\
  Critic & $1.5\times10^{-4}$ & $3\times10^{-4}$ & $3\times10^{-4}$ & $3\times10^{-4}$ \\
  Actor               & -                    & $1\times10^{-4}$ & $3\times10^{-4}$ & $3\times10^{-4}$ \\
  Temperature      & -                    & -                  & -                  & $3\times10^{-4}$ \\
  \midrule
  \multicolumn{5}{l}{\emph{Critics and target networks}}\\
  Critic loss                          & Huber    & MSE      & MSE      & MSE \\
  Number of critics                    & $1$      & $1$      & $2$      & $2$ \\
  Polyak coefficient $\tau_{\mathrm{targ}}$ & $0.0025$ & $0.0025$ & $0.005$ & $0.005$ \\
  Reward clip  & -      & $8$      & $8$      & $8$ \\
  \midrule
  \multicolumn{5}{l}{\emph{Exploration}}\\
  Behaviour policy                     & $\varepsilon$-greedy & Gaussian & Gaussian & Stochastic \\
  Range & $1 \to 0.01$ & - & - & - \\
  Warm-up/decay (episodes) & $100 / 600$  & - & - & - \\
  Action-noise std & -     & $0.1$ & $0.1$ & - \\
  Target entropy    & -      & -  & -  & $-\dim\mathcal A$ \\
  \midrule
  \multicolumn{5}{l}{\emph{TD3 target smoothing}}\\
  Smoothing std               & -      & -  & $0.2$ & - \\
  Smoothing clip                   & -      & -  & $0.5$ & - \\
  Policy delay                     & -      & -  & $2$   & - \\
  \bottomrule
  \end{tabular}
  \caption{Hyperparameters of the discrete (DQN) and continuous-action
  (DDPG, TD3, SAC) agents}
  \label{tab:hyperparams}
  \end{table}

\section{Stylized Reference Benchmarks}
\label{sec:benchmark}

We compare the performance of our DQN-learned policy and the continuous-control extensions against two different stylized reference benchmarks for optimal market making. Both benchmarks serve an illustrative purpose of closed-form approaches as we primarily compare the DQN policy against the continuous-control extensions. The first benchmark is adapted from the optimal market making models of Avellaneda and Stoikov \cite{avellaneda2008high} later solved explicitly by Guéant, Lehalle and Fernandez-Tapia \cite{gueant2013dealing}. We adapt the market making model to liquidation only and without risk aversion for the continuous phase. We then suggest an approximation allowing a straightforward application to our discrete-time setup. The second benchmark is the time weighted average price policy for the continuous phase. We adopt the same heuristic liquidation rule for the auction phase. A theoretical solution to the optimal market making problem on a closing auction is left for future work.

\subsection{Avellaneda-Stoikov Optimal Market Making}\label{section:AS}

In this section, we will recall the main results of \cite{avellaneda2008high} and \cite{gueant2013dealing}, in the case where the market maker acts as a seller only. We consider the continuous phase, that is $t < \tau^\op$ in what follows. Suppose we consider a continuous time setup, that is we work on $[0,t_n]$, where $t_n$ is the last time of the continuous phase in our initial framework. The market mid price is assumed to follow an arithmetic Brownian motion $\mathrm{d}S_t^\mathrm{mid}=\sigma \mathrm{d}B_t$ with $\sigma>0$. We assume
that transactions have constant size $\Delta$. For simplicity, assume $\Delta = 1$. The inventory process writes $q_t = I_0 - N_t^a$, where $N^a$ is the point process, independent of $B$, giving the cumulative number of shares sold by the market maker. Formulated initially by Avellaneda and Stoikov, we assume that the intensity of $N^a$ depends on the spread $\delta_t^a = k_t^\bullet - k_t^\mathrm{mid}$ via the following relationship:
\[\lambda^a(\delta^a) = Ae^{-\alpha k\delta^a},\]
for $A,k > 0$. The cash process $X_t$ of the market maker evolves according to $\mathrm{d}X_t = (S_t^\mathrm{mid}+\alpha\delta_t^a)\mathrm{d}N_t^a$. Let $T = t_n$ and $\tilde{\mathcal{A}}$ be the set of bounded predictable processes. The market maker optimizes

\[(\mathbf{M})\quad \sup_{\delta^a \in \tilde{\mathcal A}} \mathbb{E}\left [X_T + q_{T}S_{T}^\mathrm{mid} \right]\]

\begin{proposition}

The optimal quotes solving $(\mathbf{M})$ are given by
\[\delta^{a,*}(t,q) = \frac{1}{\alpha k}\left [ 1+\ln \left ( \frac{v_q(t)}{v_{q-1}(t)}\right)\right],\] where 
\[\forall q \in \llbracket 0,Q \rrbracket, \quad v_q(t) = \sum_{j=0}^q \frac{(Ae^{-1}(T-t))^j}{j!}.\]
\end{proposition}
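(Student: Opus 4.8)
The plan is to use dynamic programming. Introduce the value function
\[u(t,x,q,s) = \sup_{\delta^a \in \tilde{\mathcal A}} \mathbb{E}\big[X_T + q_T S_T^\mathrm{mid} \mid X_t = x,\ q_t = q,\ S_t^\mathrm{mid}=s\big],\]
whose Hamilton--Jacobi--Bellman equation reads
\[\partial_t u + \tfrac12 \sigma^2 \partial_{ss} u + \sup_{\delta\in\mathbb{R}} \lambda^a(\delta)\big[u(t,x+s+\alpha\delta,q-1,s)-u(t,x,q,s)\big]=0,\]
with terminal condition $u(T,x,q,s)=x+qs$ and with the convention that the jump term is absent when $q=0$ (no inventory left to sell). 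The linear terminal data and the fact that the controlled dynamics are affine in $(x,s)$ suggest the ansatz $u(t,x,q,s)=x+qs+\theta_q(t)$ with $\theta_q(T)=0$ and $\theta_0\equiv 0$.

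Substituting the ansatz, one gets $\partial_t u=\theta_q'(t)$, $\partial_{ss}u=0$, and the bracket in the jump term collapses to $\alpha\delta+\theta_{q-1}(t)-\theta_q(t)$, so the HJB reduces to the ODE system
\[\theta_q'(t) + \sup_{\delta\in\mathbb{R}} A e^{-\alpha k\delta}\big(\alpha\delta + \theta_{q-1}(t)-\theta_q(t)\big)=0,\qquad q\in\llbracket 1,Q\rrbracket.\]
For fixed $t$, writing $c=\theta_{q-1}(t)-\theta_q(t)$, the map $\delta\mapsto Ae^{-\alpha k\delta}(\alpha\delta+c)$ is unimodal (increasing then decreasing), with unique maximizer given by the first-order condition $\alpha\delta+c=1/k$; hence $\delta^{a,*}(t,q)=\tfrac1{\alpha k}\big(1-k(\theta_{q-1}(t)-\theta_q(t))\big)$ and the supremum equals $\tfrac{A}{k}e^{-1}e^{k(\theta_{q-1}(t)-\theta_q(t))}$, so that $\theta_q' = -\tfrac{Ae^{-1}}{k}\,e^{k(\theta_{q-1}-\theta_q)}$.

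The key simplification is the change of variables $v_q(t):=e^{k\theta_q(t)}$: multiplying the ODE by $ke^{k\theta_q(t)}$ turns it into the linear recursion $v_q'(t)=-Ae^{-1}v_{q-1}(t)$ with $v_q(T)=1$ and $v_0\equiv 1$. Solving recursively in $q$ (each step an elementary integration in $t$) yields $v_q(t)=\sum_{j=0}^q \frac{(Ae^{-1}(T-t))^j}{j!}$, which is exactly the claimed expression; since $\theta_{q-1}-\theta_q=\tfrac1k\ln(v_{q-1}(t)/v_q(t))$, substituting back into $\delta^{a,*}$ gives $\delta^{a,*}(t,q)=\tfrac1{\alpha k}\big[1+\ln(v_q(t)/v_{q-1}(t))\big]$.

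Finally, a verification argument closes the proof: $u(t,x,q,s)=x+qs+\tfrac1k\ln v_q(t)$ is $C^1$ in $t$, the candidate feedback $\delta^{a,*}$ is bounded on $[0,T]\times\llbracket 0,Q\rrbracket$ (each $v_q/v_{q-1}$ is continuous and strictly positive on the compact $[0,T]$, hence $\delta^{a,*}\in\tilde{\mathcal A}$), and applying It\^o's formula for jump--diffusions to $u(t,X_t,q_t,S_t^\mathrm{mid})$ along any admissible $\delta^a$ and taking expectations shows $u(0,X_0,q_0,S_0^\mathrm{mid})$ is an upper bound for the objective of $(\mathbf M)$, with equality for $\delta^{a,*}$. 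The main technical point is precisely this verification step: one must justify that the compensated-jump and Brownian local-martingale terms vanish in expectation, which follows by localization since $q_t\in\llbracket 0,I_0\rrbracket$ is bounded, $N^a$ has bounded intensity, and $S^\mathrm{mid}$ is a Brownian motion with finite moments.
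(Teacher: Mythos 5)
Your proposal is correct, and its computational core coincides with the paper's: both reduce the problem to the linear ODE system $v_q'(t) = -Ae^{-1}v_{q-1}(t)$ with $v_q(T)=1$, $v_0\equiv 1$, and solve it by recursive integration (induction) to obtain $v_q(t)=\sum_{j=0}^q \frac{(Ae^{-1}(T-t))^j}{j!}$. The difference is one of self-containedness: the paper imports the reduction to this ODE system and the form of the optimal quote from \cite{gueant2013dealing} (as the $\gamma\to 0$ limit of the CARA market maker), whereas you re-derive it from scratch via the HJB equation, the affine ansatz $u=x+qs+\theta_q(t)$, the first-order condition $\alpha\delta+c=1/k$, and the exponential change of variables $v_q=e^{k\theta_q}$, and you correctly flag the verification step (localization of the Brownian and compensated-jump martingales, using boundedness of $q_t$ and of the feedback $\delta^{a,*}$) as the technical point that must be checked; this buys a proof independent of the cited reference at the cost of length, while the paper's version isolates only the new computation, namely the closed form of $v_q$.
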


\begin{proof}
Note that $(\mathbf{M})$ corresponds to the risk-neutral market maker $\gamma\to 0$ in the standard optimal market making problem investigated in \cite{avellaneda2008high,gueant2013dealing}. In this case, one proves that the problem is reduced to find the solution to the ODE system 
\[v_q'(t) = -Ae^{-1} v_{q-1}(t),\quad \forall q \in \llbracket 1,Q \rrbracket,\] and we directly get optimal quotes from \cite{gueant2013dealing} when $\gamma\to 0$ by \[\delta^{a,*}(t,q) = \frac{1}{\alpha k}\left [ 1+\ln \left ( \frac{v_q(t)}{v_{q-1}(t)}\right)\right].\]

    We prove the result by induction. Note that $v_0 = 1$. Assume now that 
    \[v_{q-1}(t)= \sum_{j=0}^{q-1} \frac{(Ae^{-1}(T-t))^j}{j!},\; q\geq 1.\] 
    We compute
\begin{align*}
    v_q(t) &= v_q(T)- \int_t^T v_q'(s)\mathrm{d}s \\
    &=1 + Ae^{-1}\int_t^T v_{q-1}(s)\mathrm{d}s\\
    &= 1 + Ae^{-1}\sum_{j=0}^{q-1} \int_t^T \frac{(Ae^{-1}(T-s))^j}{j!}\mathrm{d}s\\
    &= 1 + \sum_{j=0}^{q-1} \frac{(Ae^{-1})^{j+1}}{j!}\int_0^{T-t} s^j \mathrm{d}s\\
    &= 1+\sum_{j=0}^{q-1} \frac{(Ae^{-1}(T-t))^{j+1}}{(j+1)!}\\
    &= \sum_{j=0}^q \frac{(Ae^{-1}(T-t))^j}{j!}.
\end{align*}
This completes the proof by induction.
\end{proof}

The optimal quotes $\delta^{a,*}(t,q)$ derived above hold in continuous time. These quotes yield, at time $t$ and given the current inventory $q$, the optimal quote price (that is $S_t^\mathrm{mid} + \alpha \delta^{a,*}(t,q)$). However, we work in discrete time in our setting. Therefore, if at the end of period $t$, when action $(v_t,\delta_t)$ has been submitted, $m_t$ shares have been sold, it can be viewed as if on $[t,t+1)$, $m_t/\Delta$ fills occurred, all at price $\delta^{a,*}(t,q)$. The volume to be submitted is $v_t = q_t$ in this approximation. This allows to ensure that if enough market takers come to the market, execution is not limited by the volume exposed by the market taker. The approximation is twofold: each of the fills is at price $\delta^{a,*}(t,q_t)$ (instead of $\delta^{a,*}(s,q_s)$ for $t\leq s < t+1$) and we expose the whole current inventory at any time $t$. The notion of ``exposed volume'' does not exist in the continuous time setting because of the fixed transaction size. To conclude, the action of the market maker on the continuous phase writes $(q_t,\delta^{a,*}(t,q_t))$. 

\begin{remark}
    Note that whenever $\delta^{a,*}(t,q_t)$ is not integer, we take the closest integer value instead.
\end{remark}

\begin{remark}
\label{rem:as_is_mid}
    When $q_t = 0$, then $\delta^{a,*}(t,0) = (\alpha k)^{-1}$: if $\lfloor (\alpha k)^{-1}\rfloor = 0$, then the price quoted will be the mid price exactly.
\end{remark}

Once the auction opens, an inventory $q_{\tau^\op}\geq 0$ remains. We implement the following heuristic policy. Let $\tilde S$ be the average between the mean and the max price of executed orders during the continuous phase. The whole remaining inventory $q_{\tau^\op}$ is put on the auction at $\tilde{S}$, with supply function $\tilde{g}_z(p) =zq_{\tau^\op}(p-\tilde{S})_+$. In Section \ref{sec:numerics}, we consider $z=10$. This single order is submitted right at the beginning of the auction, and only potentially executed at the clearing time.

\subsection{Time-Weighted Average Price Benchmark}\label{section:TWAP}

The second benchmark is deliberately simpler. Given the current inventory $q_t$, the trader submits a deterministic volume
$v_t = \lceil q_t/(T-t+1)\rceil$, quoted at the best ask price, \textit{i.e.} $\delta_t = 1$. This strategy corresponds to a uniform liquidation of the remaining inventory over the residual trading horizon. However, such a policy does not guarantee full liquidation during the continuous trading phase, as execution is conditional on order matching.
This benchmark coincides with the minimum-impact strategy introduced in the seminal work of Almgren and Chriss \cite{almgren2001optimal}, which minimizes the expected implementation shortfall under market-impact considerations.
For the auction phase, we adopt exactly the same liquidation policy as in the Avellaneda–Stoikov benchmark. 

\section{Numerical Simulations}
\label{sec:numerics}

This section employs the generative stochastic market model formulated in Section \ref{sec:marketmodel} to simulate continuous trading and closing auctions. We compare the DQN-learned policy against the continuous-control methods and the two stylized reference benchmarks: Avellaneda-Stoikov (AS) strategy, see Section \ref{section:AS} and the TWAP strategy see Section \ref{section:TWAP}. The goal of this section is to emphasize the performance of the baseline DQN-learned policy compared with the continuous-control extended policies and the benchmarks. We start by generating a emulator of the CLOB followed by the closing auction. The algorithm used to generate the market mechanism is defined in Section \ref{sub:generative} with Algorithm \ref{alg:generative_model}. In Section \ref{sub:rough} we generate mid price process values by using the rough-Heston to describe the evolution of the price of a risky asset, see \cite{gatheral2022volatility}. Finally, in Section \ref{sub:historical}
we train and test our DQN-learned policy to find the optimal trading strategy along a trading day with historical data for CAT, PG, GOOGL, JPM and MSFT. In both cases (generated or historical data for the stock price) we note that our learning algorithm outperform the benchmarks on the mean returns.\vspace{0.5em}

In the synthetic simulations we consider a stylized high-frequency setting: a two-minute CLOB phase (120 nominal one-second steps) followed by a 30-second closing auction (30 steps), with the mid price generated by the rough Heston model presented in Section \ref{subsub:roughheston}. In the historical simulations, each time step is a minute and we consider a two-hour CLOB phase followed by a 30-minute closing auction. This corresponds roughly to what happens at exchanges like NYSE or NASDAQ and the CLOB phase can be viewed as the trading sessions leading up to the closing auction. We train the models on 1{,}000 (resp. 500) episodes in the synthetic (resp. historical) simulations and then run an additional 100 episodes keeping the policy fixed for evaluation. Throughout training, we consider a validation set which allows to select the report policy by early stopping.

\subsection{Generative Stochastic Market Model}
\label{sub:generative}

We now explain in details our market emulator to generate the financial market in our setting. 
The algorithm is describe in Algorithm \ref{alg:generative_model}.
Note that our emulator is based on empirical studies to model some key components of our model. 
\vspace{0.5em}

\begin{algorithm}[H]
\caption{Generative stochastic model}
\label{alg:generative_model}
\begin{algorithmic}[1]
\REQUIRE Parameter vector $ (\lambda_0,v_m,\gamma_m,V, \beta_a,\beta_b,V_\infty, \rho, U_1,U_2,M_1,M_2, p_1,p_2,p_3,p_4)$
\STATE Sample $N^+,N^- \sim \mathrm{PP}(\lambda_0)$ on $[0,\tau^\op]$
\STATE Define $\{\hat{t}_i; i \in \llbracket 0,m \rrbracket\}$ as $\hat{t}_0=0$ and
\[\hat{t}_i = \min(\max(\hat{t}_{i-1}+1,\tau_i),\tau^{\op}-1)\mathbf{1}_{\{\hat{t}_{i-1} < \tau^\op-1\}}+ (\tau^\op-1)\mathbf{1}_{\{\hat{t}_{i-1} \geq \tau^\op-1\}}\]
with $\tau_i = \max(\tau_i^+, \tau_i^-)$ and $\tau_i^\zeta = \inf\left(s \geq \hat{t}_{i-1}, N_s^\zeta > N_{\hat{t}_{i-1}}^\zeta\right)$ for $i \geq 1$ and $\zeta \in \{-,+\}$.
\FOR{$k = 0,\ldots,n$}
\STATE Sample $Z_{\hat{t}_k,\zeta,i}\sim \mathrm{Pareto}(v_m,\gamma_m)$ for $\zeta \in \{+,-\}$ and $N_{\hat{t}_{k-1}}^\zeta \le i \leq N_{\hat{t}_k}^\zeta$ and let $\nu_{\hat{t}_k}^{\zeta,i} = \min(Z_{\hat{t}_k,\zeta,i},V)$
\STATE Sample $V_{\hat{t}_k}^{\zeta,1}\sim V_\infty \mathrm{Beta}(\beta_a,\beta_b)$ and let $ V_{\hat{t}_k}^{\zeta,j+1} = \rho V_{\hat{t}_k}^{\zeta,j}$ for $j \in \llbracket 1, L \rrbracket$
\ENDFOR
\FOR{$k = n+1,\ldots,m$}
\STATE Sample $B_{\hat{t}_k}\sim \mathcal{B}(p_1)$, $D_{\hat{t}_k}\sim \mathcal{B}(p_2)$, $J_{\hat{t}_k}^+ \sim \mathcal{B}(p_3)$, $J_{\hat{t}_k}^- \sim \mathcal{B}(p_3)$, $G_{\hat{t}_k}^+\sim \mathcal{B}(p_4)$ and $G_{\hat{t}_k}^-\sim \mathcal{B}(p_4)$
\IF{$B_{\hat{t}_k} = 1$}
\STATE Sample $K_{\hat{t}_k}^i \sim \mathcal{U}([U_1,U_2])$ and $S_{\hat{t}_k}^i \sim S_{\tau^\op}^\mathrm{mid} + \alpha \mathcal{U}(\llbracket M_1,M_2 \rrbracket)$ \COMMENT{Last belief on the mid price}
\STATE New market maker $(K_{\hat{t}_k}^i, S_{\hat{t}_k}^i)$ arrives 
\ENDIF
\IF{$D_{\hat{t}_k}=1$ and at least one market maker is present}
\STATE Cancel a random exogenous supply order
\ENDIF
\FOR{$\zeta \in \{+,-\}$}
\IF{$J_{\hat{t}_k}^\zeta =1$}
\STATE Sample $Z_{\hat{t}_k,\zeta}\sim \mathrm{Pareto}(v_m,\gamma_m)$ and let $\nu^{\zeta}_t = \min(Z_{\hat{t}_k,\zeta},V)$
\STATE New market taker arrives with volume $\nu^{\zeta}_t$
\ENDIF
\IF{$G_{\hat{t}_k}^\zeta =1$ and at least one market taker is present on side $\zeta$}
\STATE Cancel a random exogenous market order on side $\zeta$
\ENDIF
\ENDFOR
\ENDFOR
\end{algorithmic}
\end{algorithm}
The inputs of our emulator are given by
$ (\lambda_0,v_m,\gamma_m,V, \beta_a,\beta_b,V_\infty, \rho, U_1,U_2,M_1,M_2, p_1,p_2,p_3,p_4)$.
\begin{itemize}
    \item $\lambda_0$ represents the intensity of the counting processes $N^+$ and $N^-$ on the continuous phase, modeled as independent Poisson processes. The two processes are not observable for the agent during the continuous phase, but only during the auction phase. They are sampled in the first step of the algorithm.
    \item $(v_m,\gamma_m)$ are the parameters of the Pareto distribution modeling the volumes of the limit orders throughout the whole trading sessions. This reproduces the well-known heavy tail behavior of the density of market order size, see for example \cite{gopikrishnan2000statistical,gabaix2009power,bouchaud2001power}. The market model allows orders of maximum size $V$: greater liquidity takers will accumulate at volume $V$.
    \item $(\beta_a,\beta_b)$ represent the parameters of the Beta distribution modeling the first volume of the limit order book. It is similar to the Beta scaling effect described in for example \cite{jerome2022market,wang2024market}.  Note that $V_\infty$ is a rescaling parameter as the Beta distribution has support $[0,1]$. Further volumes in the limit order book decay geometrically with parameter $\rho \in (0,1]$.
    \item $(U_1,U_2)$ represent the bounds between which exogenous market maker sample their supply slopes from during the auction phase.
    \item $(M_1,M_2)$ represent the integer bounds of the price at which exogenous market makers quote during the auction phase. Exogenous market makers during the volume are assumed to be sampled as $S_t^i \sim S_{\tau^\op}^\mathrm{mid} + \alpha \mathcal{U}(\llbracket M_1,M_2 \rrbracket)$ because the auction opening time is the last time market participants see the mid price. Note that this shape of auction price have been introduced in \cite{derchu2023equilibria}. 
    \item $(p_1,p_2,p_3,p_4)$ represent the probabilities that drive the market structure during the auction phase. A new market maker arrives per step with probability $p_1$, one market maker cancels his order with probability $p_2$, a market taker arrives on either side with probability $p_3$, and a market taker cancels his order with probability $p_4$. These events are sampled using a Bernoulli distribution.
\end{itemize}

Note that all samples are done independently from each other. We do assume that the limit order book is refreshed from one time step to the next one. Line 2 of Algorithm \ref{alg:generative_model} defines the time grid $\{\hat{t}_i; i \in \llbracket 0,m \rrbracket\}$. The time grid is chosen such that there is at least one market taker on either side of the market at any time $\hat{t}_i$ for $i \in \llbracket 0,m \rrbracket$. This allows to satisfy Assumption \ref{assump:presence}. Lines 3 to 6 describe the mechanics of the continuous phase. Lines 7 to 24 describe the mechanics of the auction phase. Finally, we precise that all sampled quantities are projected on the grid we fixed in Section \ref{sec:MDP}.\vspace{0.5em}

\begin{table}[ht]
\centering
\begin{tabular}{lcc}
\toprule
\textbf{Symbol} & \textbf{Value} & \textbf{Comment} \\
\midrule
$\tau^{\mathrm{op}}$ & 120 & Auction opening time \\
$\tau^{\mathrm{cl}}$ & 150 & Clearing time \\
$I_0$ & 100 & Initial inventory \\
$\lambda_0$ & 1 & Continuous phase Poisson intensity \\
$v_m$ & 2 & Pareto distribution scale parameter \\
$\gamma_m$ & 2.5 & Pareto distribution shape parameter \\
$V_\infty$ & 15 & Beta distribution scaling parameter \\
$\beta_a$ & 2 & First Beta distribution shape parameter \\
$\beta_b$ & 5 & Second Beta distribution shape parameter \\
$\rho$ & 0.5 & Limit order book volume decay parameter \\
$\mathcal{V}$ & 30 & Maximum volume admitted by the market \\
$\mathcal{L}$ & 12 & Maximum LOB depth \\
$\mathcal{N}$ & 12 & Maximum number of market participants\\
$U_1$ & 0.1 & Exogenous supply slope lower bound \\
$U_2$ & 2 & Exogenous supply slope upper bound \\
$M_1$ & -10 & Exogenous supply spread lower bound \\
$M_2$ & 10 & Exogenous supply spread upper bound \\
$p_1$ & 1 & New market maker arrival probability \\
$p_2$ & 0 & Market maker cancellation probability \\
$p_3$ & 0.3 & New market taker arrival probability \\
$p_4$ & 0.05 & Market taker cancellation probability \\
$\lambda$ & 0.5 & Inventory penalty \\
$q$ & 1 & Wrong-side dealing penalty \\
$k^*$ & 1000 & Tolerance \\
$d$ & 0.1 & Cancellation cost per unit \\
$\alpha$ & 0.01 & Tick size \\
$\beta$ & 10/3 & Tick size of grid on $K^a$ \\
$\mathcal{K}$ & 10 & Upper bound on $K^a/\beta$ \\
\bottomrule
\end{tabular}
\caption{Generative stochastic market model parameters}
\label{tab:params_generative}
\end{table}

\subsection{Benchmark Simulations}\label{sec:bnechmarksimu}
We start by simulating the optimal strategies for the Benchmarks in Figure \ref{fig:benchmark-anatomy}, which corresponds to the synthetic setting. On the top left, we show the evolution of the inventories for both AS and TWAP, on the top right we simulate the continuous phase limit price, on the bottom left the optimal volume submitted during the continuous phase, and on the bottom right we show the optimal slope $K_t$ during the auction. Note that the TWAP fails to liquidate all the inventory before the auction starts unlike the AS strategy explaining the difference of order executed in the auction between these two benchmarks.

\begin{figure}[ht]
    \centering
    \includegraphics[trim={0 0 0 0.5cm}, clip, width=0.7\linewidth]{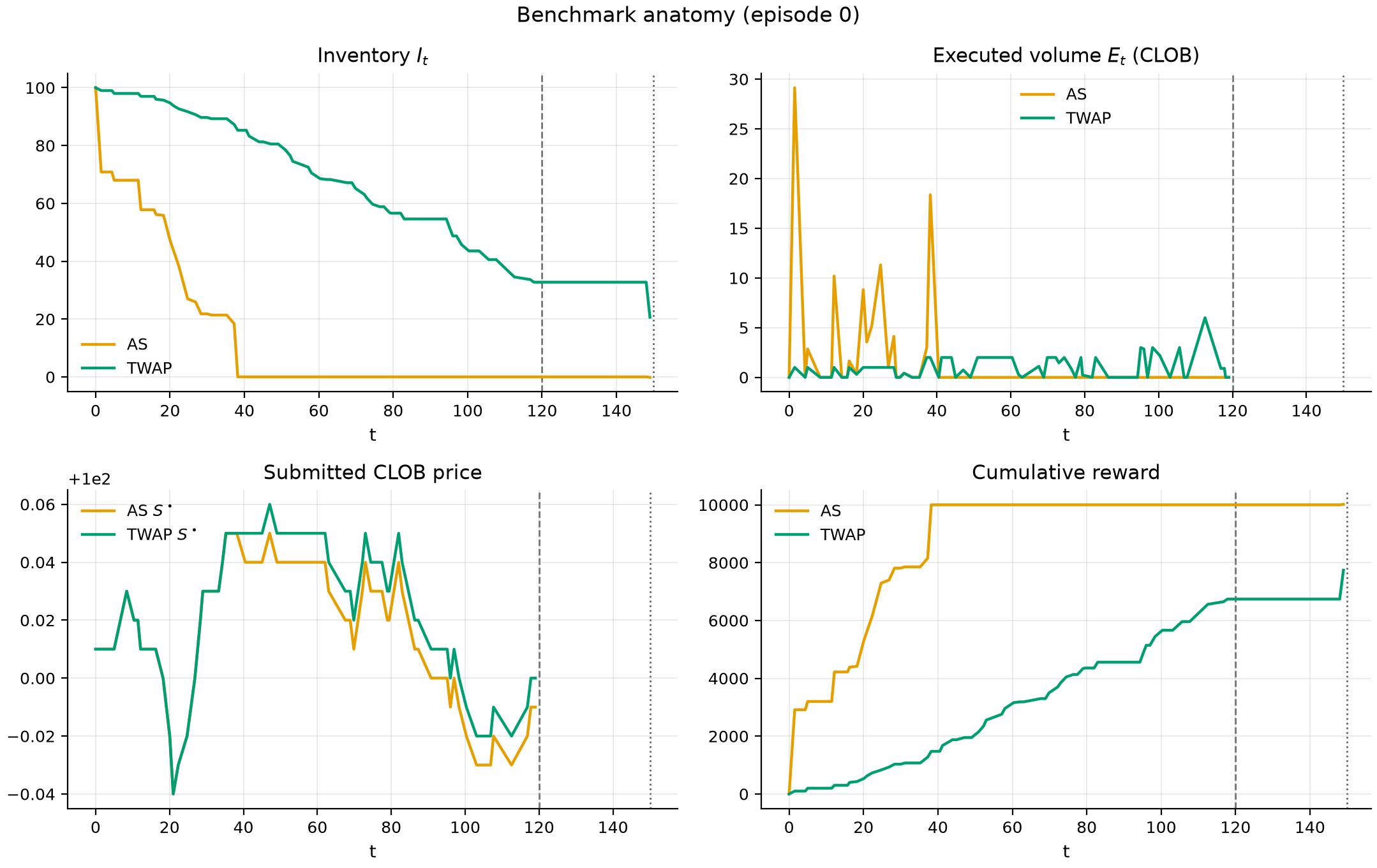}
\caption{Benchmark strategies on one evaluation episode (AS and TWAP) in the synthetic setting for fixed seed}
    \label{fig:benchmark-anatomy}
\end{figure}

\subsection{Rough Heston Model for the Mid Price: Generative Data Approach}
\label{sub:rough}
In this section, we generate data from the rough-Heston model. We consider a trading session with a 2 minute continuous phase, followed by a 30 second closing auction.

\subsubsection{Numerical Method and Parameter Calibration}
\label{subsub:roughheston}

In the first numerical implementation, we assume that the mid price $S_t$ follows a rough Heston model. The motivation is based on the so-called rough volatility of financial assets \cite{gatheral2022volatility,bäuerle2020portfolio,abi2019multifactor}. Consider $\rho \in [-1,1]$ a constant, $S_0 = 100$ (which is the numerical value we work with in this section), $V_0$, $H$, $\theta$, $\varsigma$ and $\nu$ be positive constants. Recalling \cite{gatheral2022volatility,richard2023discrete}, the rough Heston model writes
\[\begin{cases}
    S_t &= S_0 + \int_0^t S_s \sqrt{V_s}\mathrm{d}\big( \rho B_s + \sqrt{1-\rho^2}B_s^\perp\big), \\
    V_t &= V_0 + \int_0^t K(t-s)\big ((\theta-\varsigma V_s)\mathrm{d}s + \nu \sqrt{V_s}\mathrm{d}B_s \big),
\end{cases}\]
where $(B,B^\perp)$ are two independent Brownian motions. We used the Euler-type scheme of \cite{richard2023discrete} to approximate this rough Heston model. As a reminder, for the grid $\pi_n = \{0= t_0^n < \cdots  t_n^n = T\}$ (here $T = \tau^\op$ and $n = \tau^\op$), we set $\Delta t_{k+1}^n =  t_{k+1}^n-\hat t_k^n$ for $k \in \llbracket 0,n-1 \rrbracket$. Then $S_{t_k} = \exp(Y_{t_k})$ for $k \in \llbracket 0,n \rrbracket$ where
\[\begin{cases}
    Y_{t_k} &=  \displaystyle Y_0 + \sum_{i=0}^{k-1} \left( -\frac{1}{2} (V_{ t_i})_+ \Delta t_{i+1}^n + \rho \sqrt{(V_{ t_i})_+}\left(B_{{t}_{i+1}}-B_{{t}_i}\right) + \sqrt{1-\rho^2}\sqrt{(V_{ t_i})_+}\left(B_{{t}_{i+1}}^\perp-B_{{t}_i}^\perp\right)\right), \\
    V_{ t_k} &=  \displaystyle V_0 + \sum_{i=0}^{k-1} \left( K({t}_k - {t}_i)(\theta - \varsigma (V_{ t_i})_+)\Delta t_{i+1}^n + K({t}_k - {t}_i)\nu\sqrt{(V_{ t_i})_+}\left(B_{{t}_{i+1}}-B_{{t}_i}\right) \right ).\end{cases}
\]
In numerical applications, we used the following parameters, as calibrated in \cite{abi2019lifting}.
\begin{table}[ht]
\centering
\begin{tabular}{lcc}
\toprule
\textbf{Symbol} & \textbf{Value} & \textbf{Comment} \\
\midrule
$H$ & 0.1 & Hurst exponent \\
$\rho$ & -0.7 & Price-volatility correlation \\
$V_0$ & 0.02 & Initial variance \\
$\theta$ & 0.04 & Long-run variance \\
$\varsigma$ & 0.3 & Variance mean-reversion rate \\
$\nu$ & 0.3 & Volatility of volatility \\
\bottomrule
\end{tabular}
\caption{Rough Heston mid-price parameters}
\label{tab:params_midprice}
\end{table}

Now, one may notice that the AS stylized reference benchmark assumes that $\mathrm{d}S_t = \sigma \mathrm{d}B_t$, instead of the rough Heston model. In this sense, the goal of the numerical simulation is to compare how our deep reinforcement learning policies performs against a stylized reference benchmark, without claiming optimality of this benchmark. We are in fact expecting our model to beat the benchmark (for example since the benchmark is only optimal for a Bachelier process for the mid price).\vspace{0.5em}

We calibrated the value $\sigma$ by estimating the standard deviation of the mid price on the trading period. From \cite{avellaneda2008high}, we have $A = \lambda_0/\gamma_m$ and $k = \alpha K$. Here, $K$ is the proportionality constant in the empirical relationship $K\Delta p = \ln(Q)$, where $\Delta p$ is the move in price when a market order of size $Q$ arrives. We did a least squares regression to determine $K$, by simulating 5{,}000 samples of the limit order book and market orders $Q \sim \mathrm{Pareto}(v_m,\gamma_m)$.

\subsubsection{Numerical Results}

In Figure \ref{fig:convergence-curves} we present training convergence results for each of the studied RL methods. We select the reported policy by early stopping on a validation set. This allows to disregard the late training degradation we can for example see for the DQN method.

\begin{figure}[H]
    \centering
    \includegraphics[width=\linewidth]{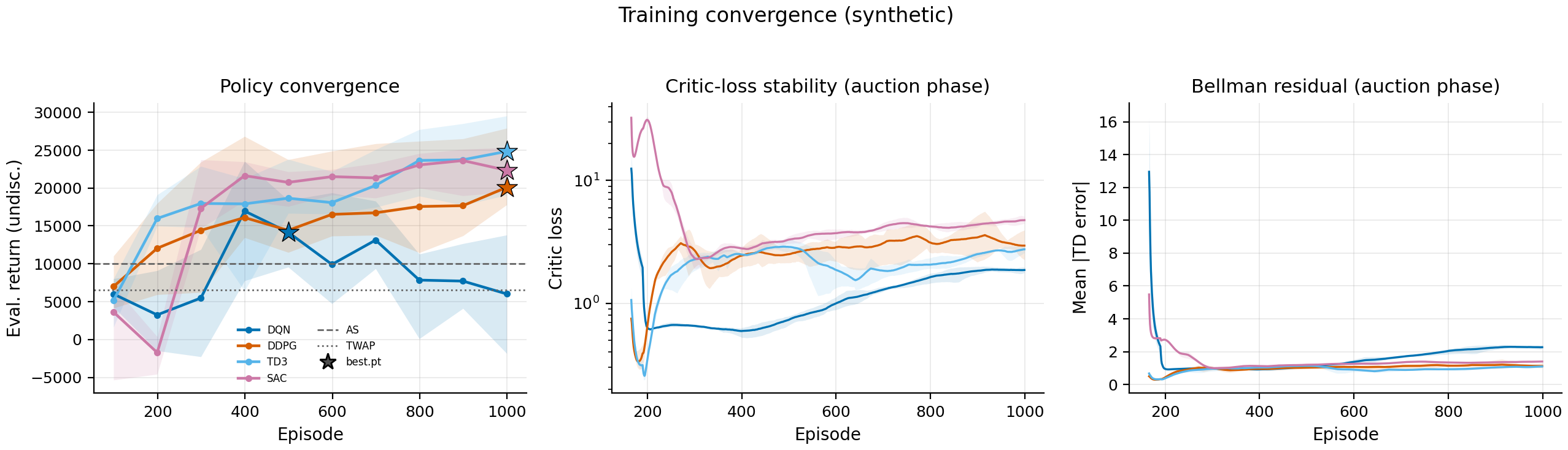}
\caption{Per-algorithm validation IQM over training with bootstrap 95\% CIs for auction policies in the synthetic setting (1{,}000 episodes). The star marks the early-stopping checkpoints (\texttt{best.pt}) placed at the across-seeds median of the per-seed validation-return-maximizing episode: seeds may peak at different episodes so the star need not coincide with the curve's maximum.}
    \label{fig:convergence-curves}
\end{figure}

The continuous-control methods stabilize correctly. Furthermore, we notice that the per-step minibatch updates and the double-Q target keep the critic loss bounded. Still, DQN validation performance shows variability and late-training deterioration, illustrating instability that can arise for example from discrete maximization and bootstrapped function approximation. This motivated using early stopping on an independent validation set to select the reported policies. The continuous-action methods exhibit more stable validation performance, particularly TD3 and SAC. For clarity, we chose to only present these curves for the auction phase, as the networks for the CLOB phase exhibited no instability at no point. All this indicates that the numerical methods have successfully converged.\vspace{0.5em}

\begin{figure}[ht]
    \centering
    \includegraphics[ width=0.7\linewidth]{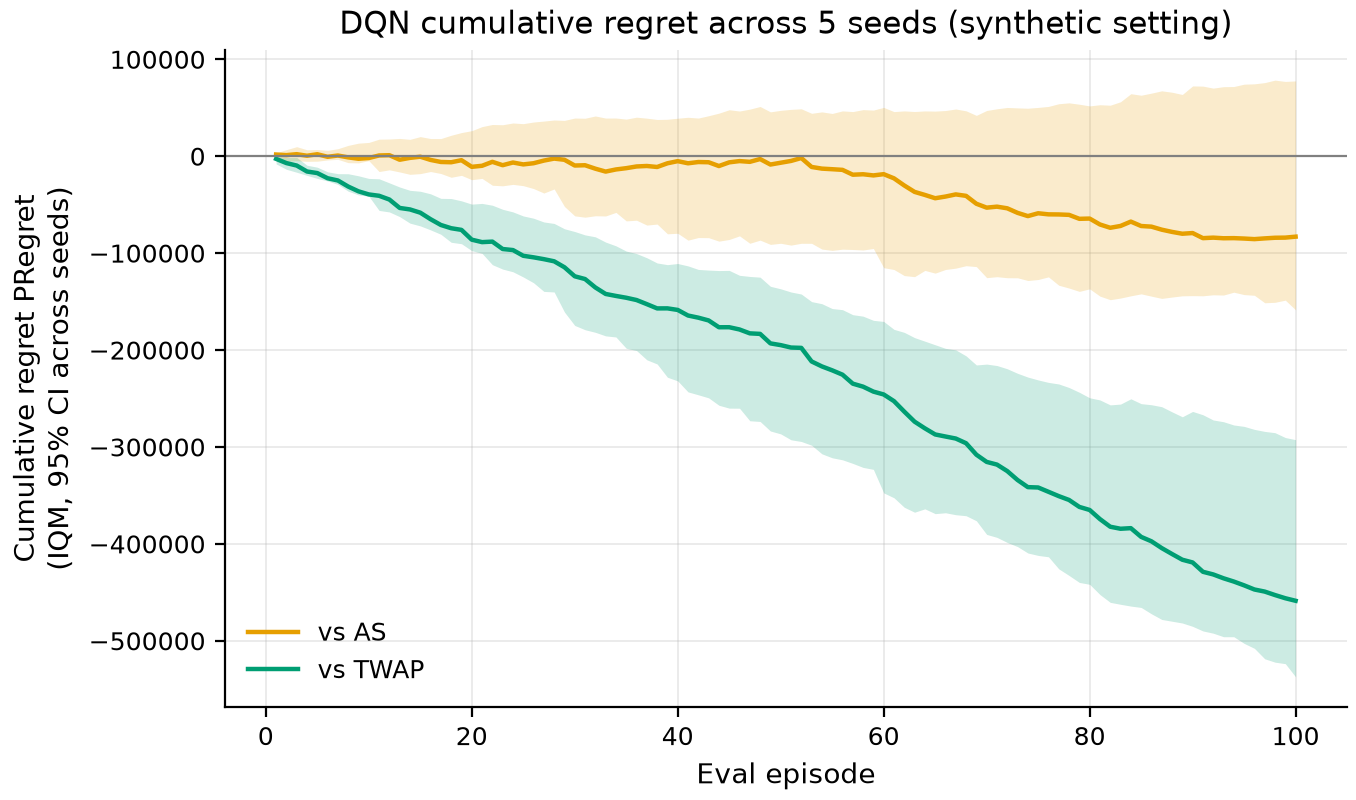}
        \caption{Cumulative regret IQM with bootstrap 95\% CI of the DQN-learned policy vs benchmarks in the synthetic setting (100 evaluation episodes)}
        \label{fig:regret_training}
\end{figure}

The regret curve with respect to the DQN-learned policy (see Figure \ref{fig:regret_training}) and TWAP decreases as the agent learns to exploit the closing auction. The difference with AS is tighter, though still eventually negative in terms of IQM, arguing that the DQN-learned policy beats AS on mean returns. We finally represent the behavior of the generative stochastic market model and the performance of the DQN model over one evaluation episode. Figure \ref{fig:episode-eval} below shows (from the left to the right and the top to the bottom) the mid price $S_t^\mathrm{mid}$; the inventory $I_t$; the number of executed shares $E_t$; the quantity $H_t^\cl$, which, as a reminder, is the hypothetical clearing price during the continuous phase and the estimated clearing price during the auction phase; the top-of-book volumes $V_t^{+,1}$ and $V_t^{-1}$; the market order arrivals during the auction phase; the one-step reward $R_t$; the cumulative reward, and the actions $(v_t, \delta_t)$ and $(K_t^a,S_t^a)$ during one evaluation episode. Figure \ref{fig:cancel-eval} shows the cancellation strategy of the agent during that same episode.

\begin{figure}[ht]
    \centering
    \includegraphics[trim={0 0 0 0.5cm}, clip, width=0.7\linewidth]{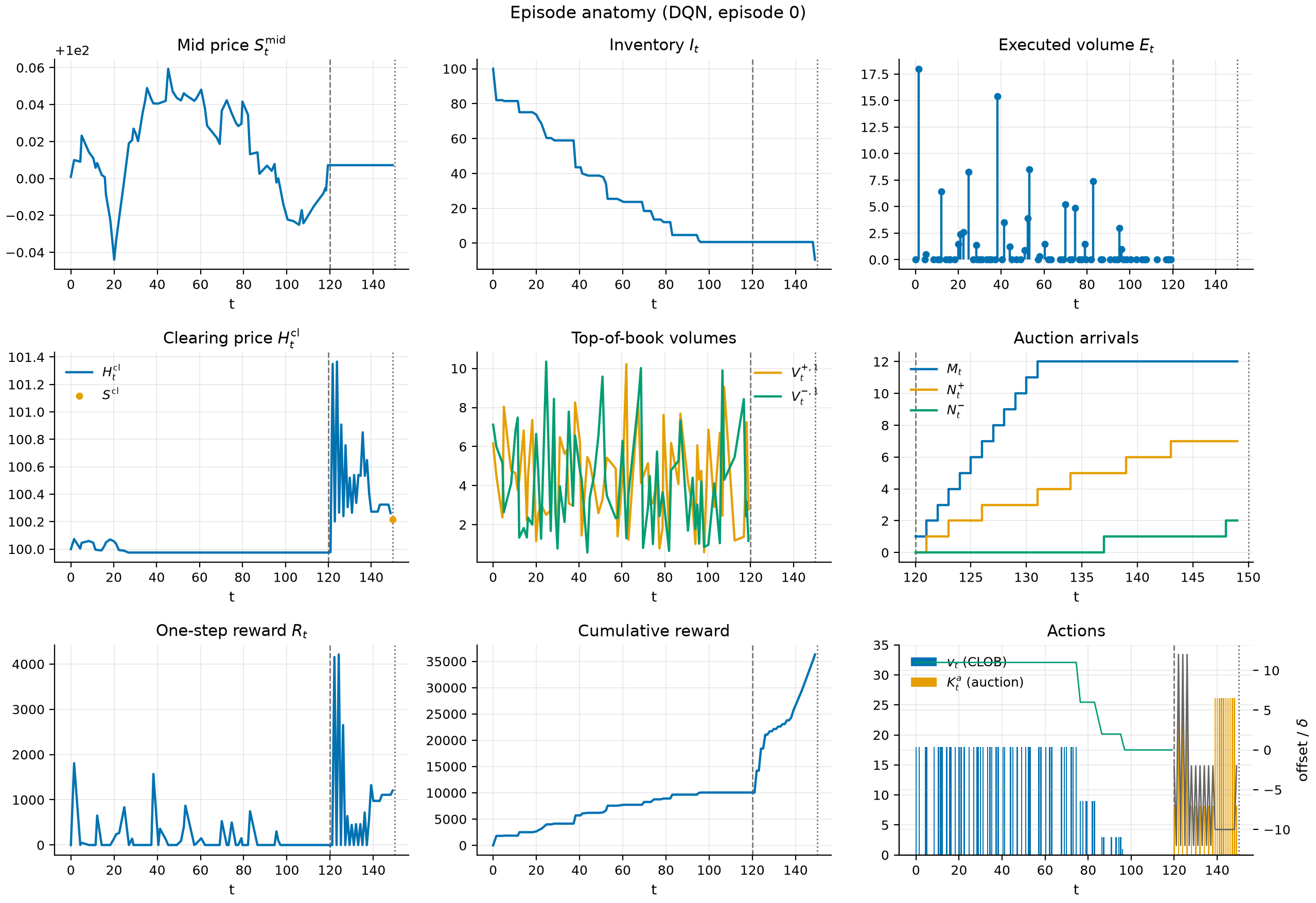}
\caption{One evaluation episode of the learned DQN policy in the synthetic setting for fixed seed}
    \label{fig:episode-eval}
\end{figure}

\begin{figure}[ht]
    \centering
    \includegraphics[width=0.7\linewidth]{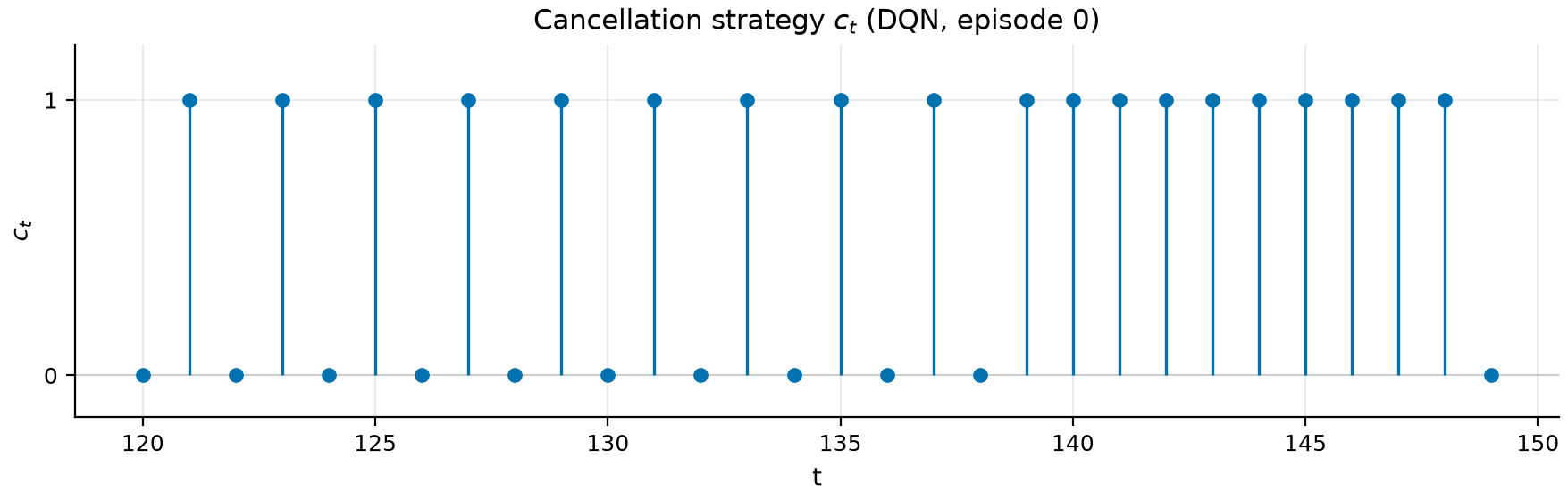}
\caption{Cancellation strategy by the DQN-learned policy in the synthetical setting on the auction phase of the evaluation episode of Figure \ref{fig:episode-eval}}
    \label{fig:cancel-eval}
\end{figure}

\paragraph{Financial Insights} We observe that the inventory of the market maker decays to $0$ during the continuous phase, and becomes negative at the clearing time of the auction, as the order of the market taker is executed. This is illustrated by the plot of $E_t$: many orders are executed during the continuous phase, while only one single volume is executed at the end of the auction phase. The estimated clearing price $H_t^\cl$ is very stable during the continuous phase. It becomes more variable during the auction phase, as one approaches the clearing time, so more information is available. Furthermore, Figure \ref{fig:cancel-eval} shows that cancellations accumulate towards the end of the auction. This suggests that the agent continuously refines his order towards the end of the closing auction, once increasingly more information is available.  Finally, the auction allows the agent to obtain important rewards. Noticeably, the fictive rewards are important in the cumulative reward, albeit no volume is executed. The RL methods under considerations gain nearly all of their edge over the benchmarks from the fictive reward, as the benchmarks do not/barely post an order during the auction, as Figure \ref{fig:reward_decomp_syn} suggests. Although fictive, these rewards can be viewed as rebates by the exchange for providing liquidity and are therefore not to be ignored. Developing a theoretical benchmark for optimal market making on a closing auction is left for future work, as this work suggests a numerical approach to the problem.

\begin{figure}[H]
    \centering
    \includegraphics[trim={0 0 0 0.56cm}, clip, width=\linewidth]{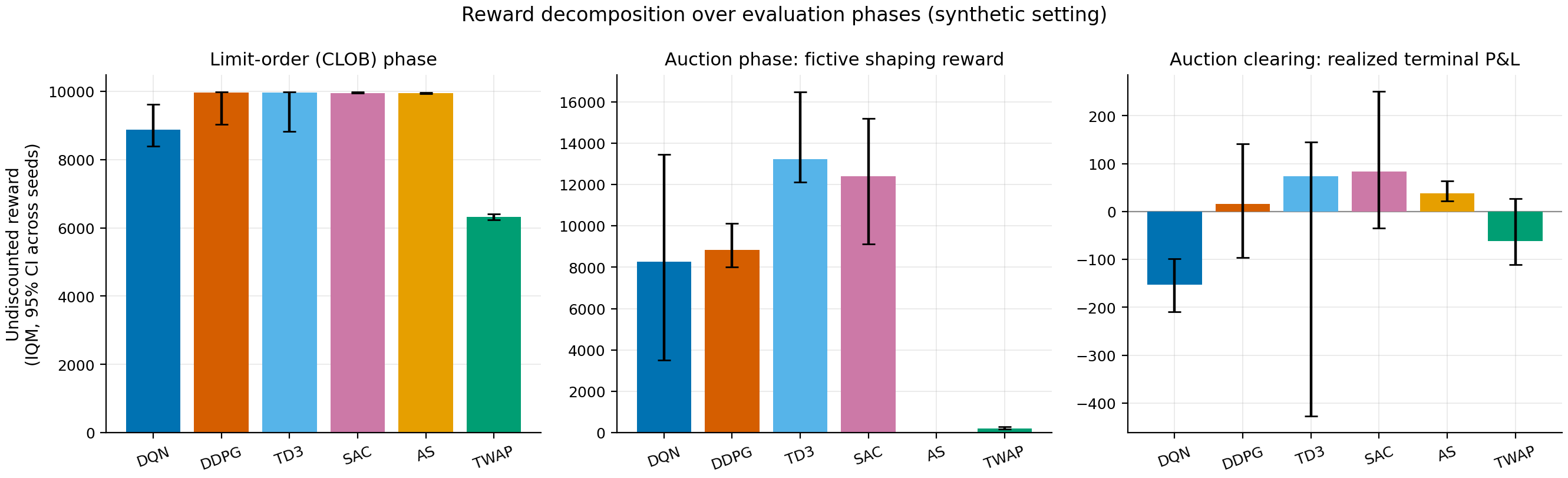}
    \caption{Per-algorithm IQM and bootstrap 95\% CI reward decomposition in the synthetic setting (100 evaluation episodes)}
    \label{fig:reward_decomp_syn}
\end{figure}

In Table \ref{tab:eval_summary_multiseed} we illustrate the inter-quartile mean return, 95\% confidence interval and the mean of the seed-means return for the initial DQN (before training), AS, TWAP, DQN, DDPG, TD3 and SAC (after training). All RL methods outperform the two stylized reference benchmarks on mean returns.\vspace{0.5em}

\begin{table}[H] \centering \resizebox{\textwidth}{!}{\begin{tabular}{lccccccc} \toprule \textbf{Metric} & \textbf{Initial DQN} & \textbf{AS} & \textbf{TWAP} & \textbf{DQN} & \textbf{DDPG} & \textbf{TD3} & \textbf{SAC} \\ \midrule IQM Return & 3,581 & 10,001 & 6,495 & 17,215 & 18,894 & 22,505 & 22,421 \\ 95\% CI & [-3,828, 7,061] & [9,979, 10,030] & [6,316, 6,626] & [11,647, 21,786] & [16,990, 20,085] & [21,650, 26,255] & [19,231, 25,561] \\ Mean (seed-means) & 2,694.8 & 10,002.9 & 6,486.0 & 16,919.9 & 18,639.9 & 23,369.0 & 22,477.3 \\ Seeds & 5 & 5 & 5 & 5 & 5 & 5 & 5 \\ \midrule \multicolumn{8}{l}{IQM improvement vs benchmark (\%)} \\ vs AS & & & -35.1\% & +72.1\% & +88.9\% & +125.0\% & +124.2\% \\ vs TWAP & & & & +165.0\% & +190.9\% & +246.5\% & +245.2\% \\ \bottomrule \end{tabular}}
\caption{Multiseed evaluation results (undiscounted returns, 100 evaluation episodes)} \label{tab:eval_summary_multiseed} \end{table}

\begin{figure}[ht]
    \centering
    \includegraphics[width=0.7\linewidth]{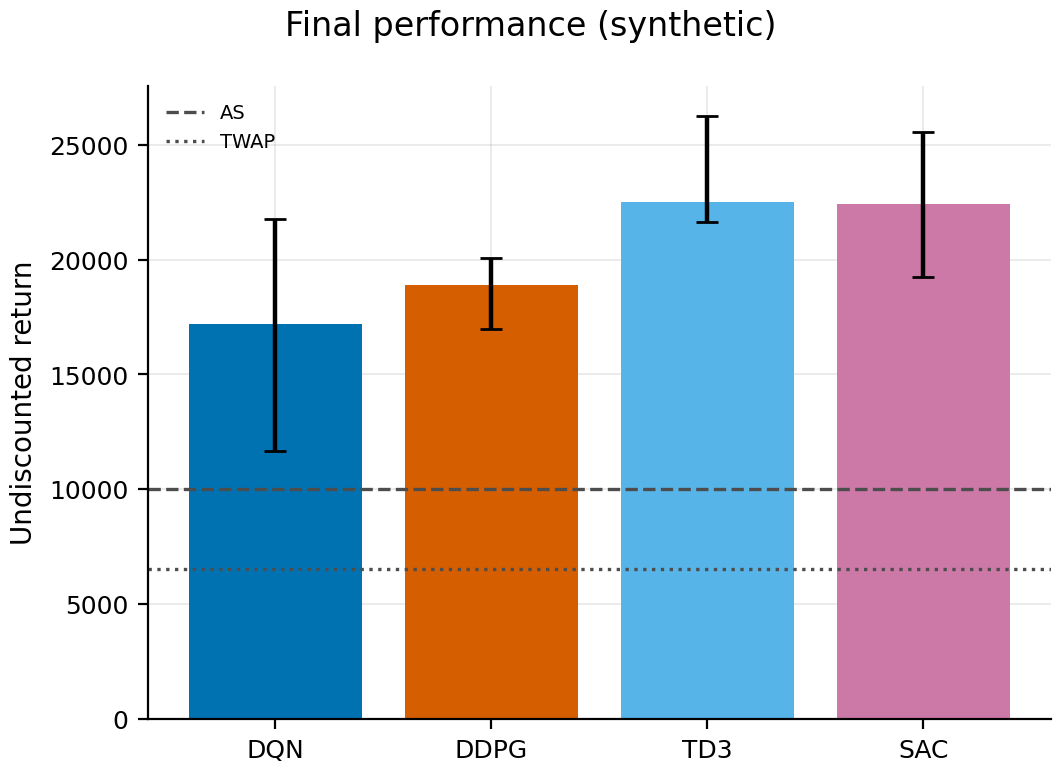}
    \caption{Per-algorithm IQM returns with bootstrap 95\% CIs in the synthetic setting (see Table \ref{tab:eval_summary_multiseed})}
    \label{fig:algorithm_comparison}
\end{figure}

All four learners beat both stylized reference benchmarks on average returns, with TD3 and SAC the strongest. From Figure \ref{fig:algorithm_comparison}, we notice that the inter-seed variance is high though for all four models, especially for DQN. Similarly, our results showed high variance on inter-episode returns, for each seed. High variance is a common issue when it comes to deep reinforcement learning problems and especially DQN. The continuous-control extensions have all shown to reduce the variance compared to DQN, as Figure \ref{fig:algorithm_comparison} confirms. The main point we make with the results is another however and twofold: (1) the continuous-control learners outperform the DQN baseline and benchmarks on expected returns, and (2) all four learners have developed the capacity to provide liquidity during the auction and shape the clearing price, per Figure \ref{fig:reward_decomp_syn}.

\subsection{Historical Data}
\label{sub:historical}

We now consider a trading session of a 2 hour continuous phase, followed by a 30 minute closing auction, where the mid price is given by realized mid price paths of stocks of the S\&P500 index.

\begin{figure}[H]
    \centering
    \includegraphics[trim={0 0 0 0.55cm}, clip, width=\linewidth]{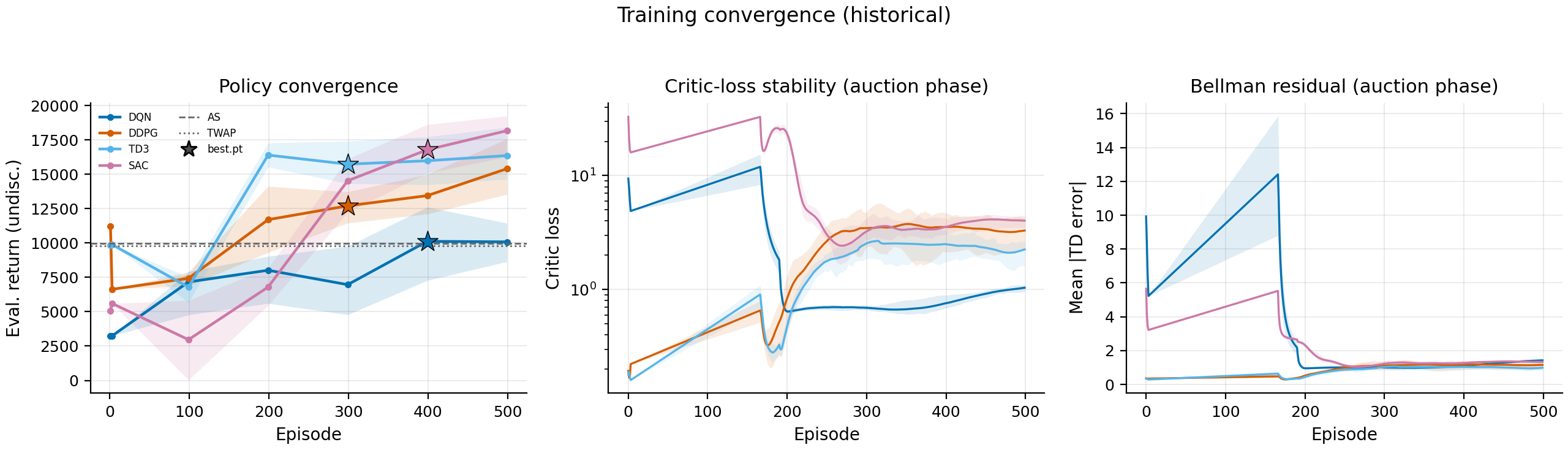}
\caption{Per-algorithm validation IQM over training with bootstrap 95\% CIs for auction policies in the historical setting (aggregated across both all five stocks and seeds, 500 episodes)}
    \label{fig:convergence-curves-hist}
\end{figure}

\begin{figure}[H]
    \centering
    \includegraphics[trim={0 0 0 0.56cm}, clip, width=\linewidth]{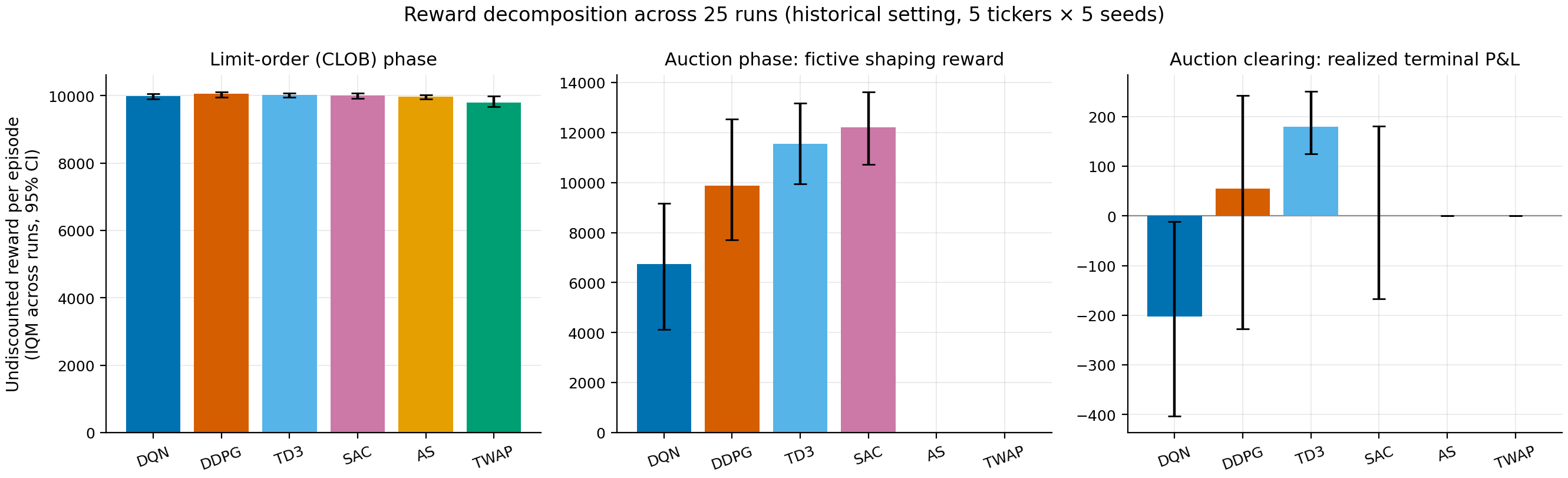}
    \caption{Per-algorithm IQM and bootstrap 95\% CI reward decomposition of different policies in the historical setting (aggregated across both all five stocks and seeds, 100 evaluation episodes)}
    \label{fig:reward_decomp_hist}
\end{figure}

\begin{figure}[H]
    \centering
    \includegraphics[trim={0 0 0 0.56cm}, clip, width=0.7\linewidth]{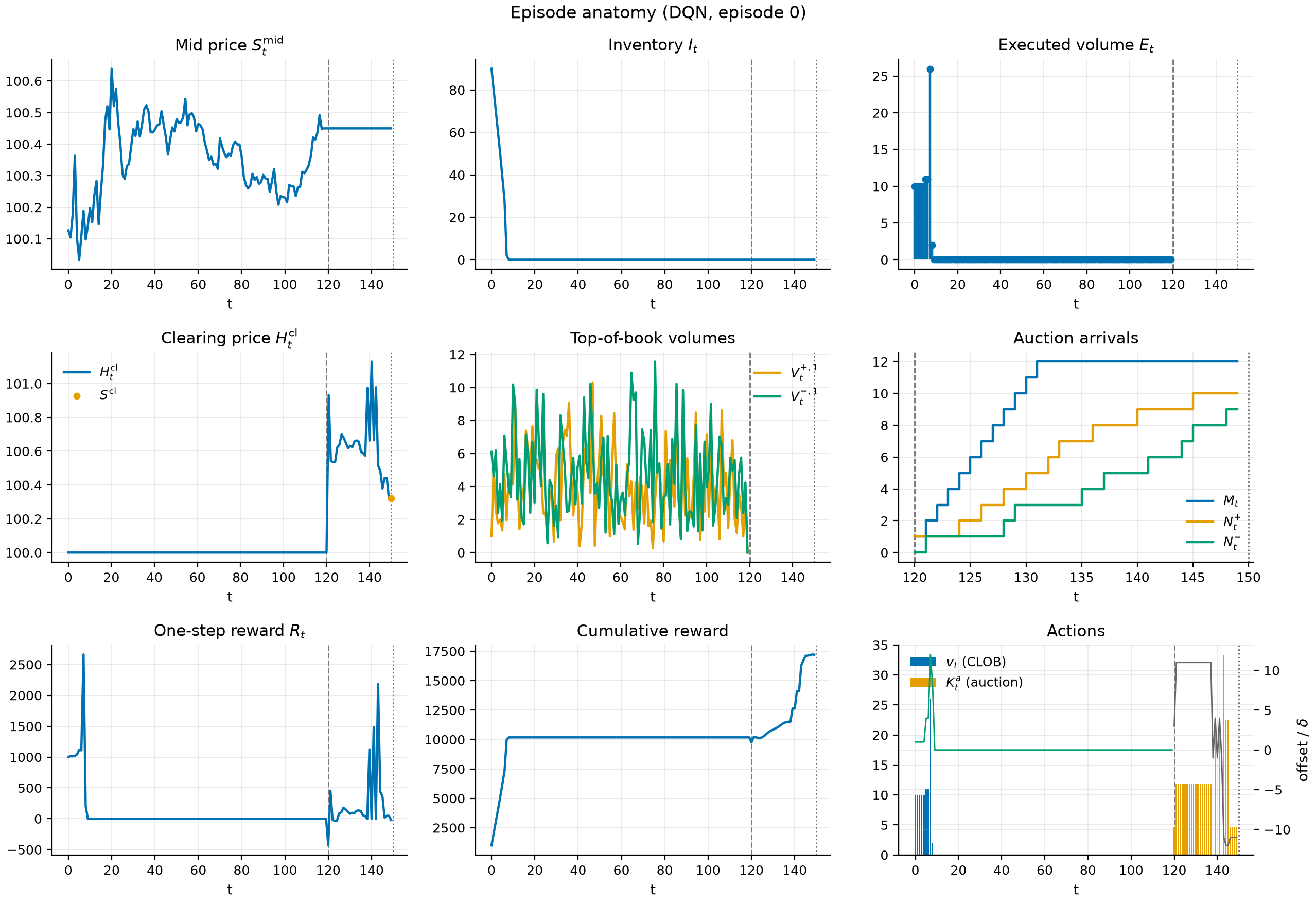}
\caption{One evaluation episode of the DQN-learned policy for GOOGL as mid price for fixed seed}
    \label{fig:episode-eval-hist}
\end{figure}

\begin{figure}[H]
    \centering
    \includegraphics[width=0.7\linewidth]{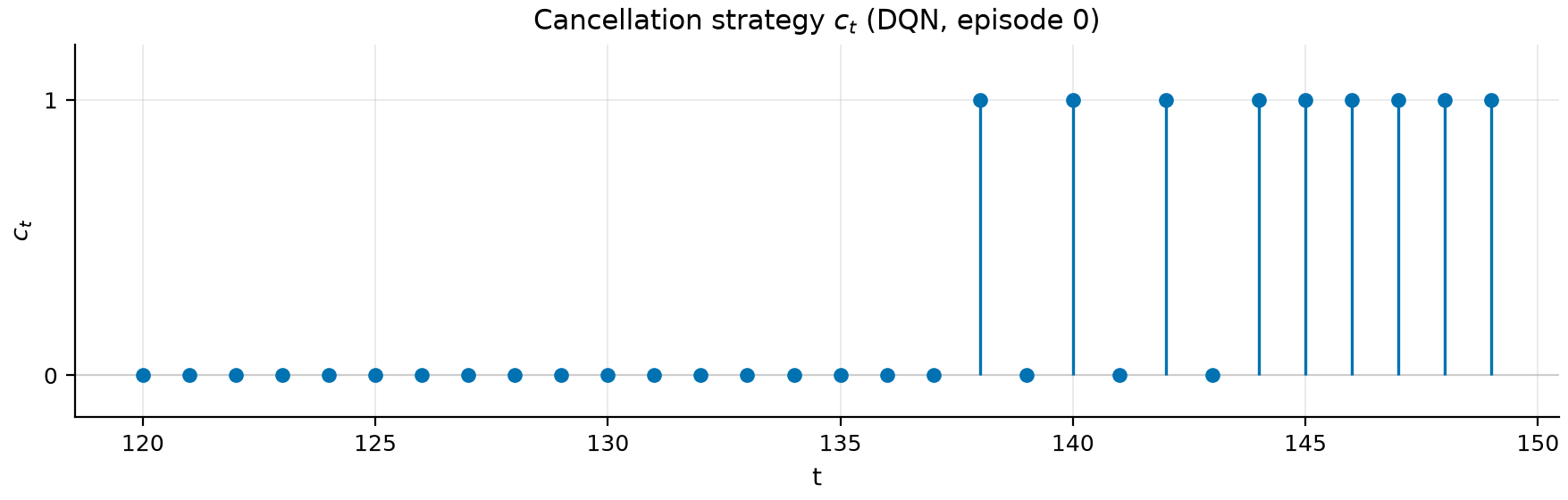}
\caption{Cancellation strategy by the DQN-learned policy in the synthetical setting on the auction
phase of the evaluation episode of Figure \ref{fig:episode-eval-hist} for GOOGL as mid price for fixed seed}
    \label{fig:episode-eval-hist-cancel}
\end{figure}

\begin{figure}[H]
    \centering
    \includegraphics[ width=0.7\linewidth]{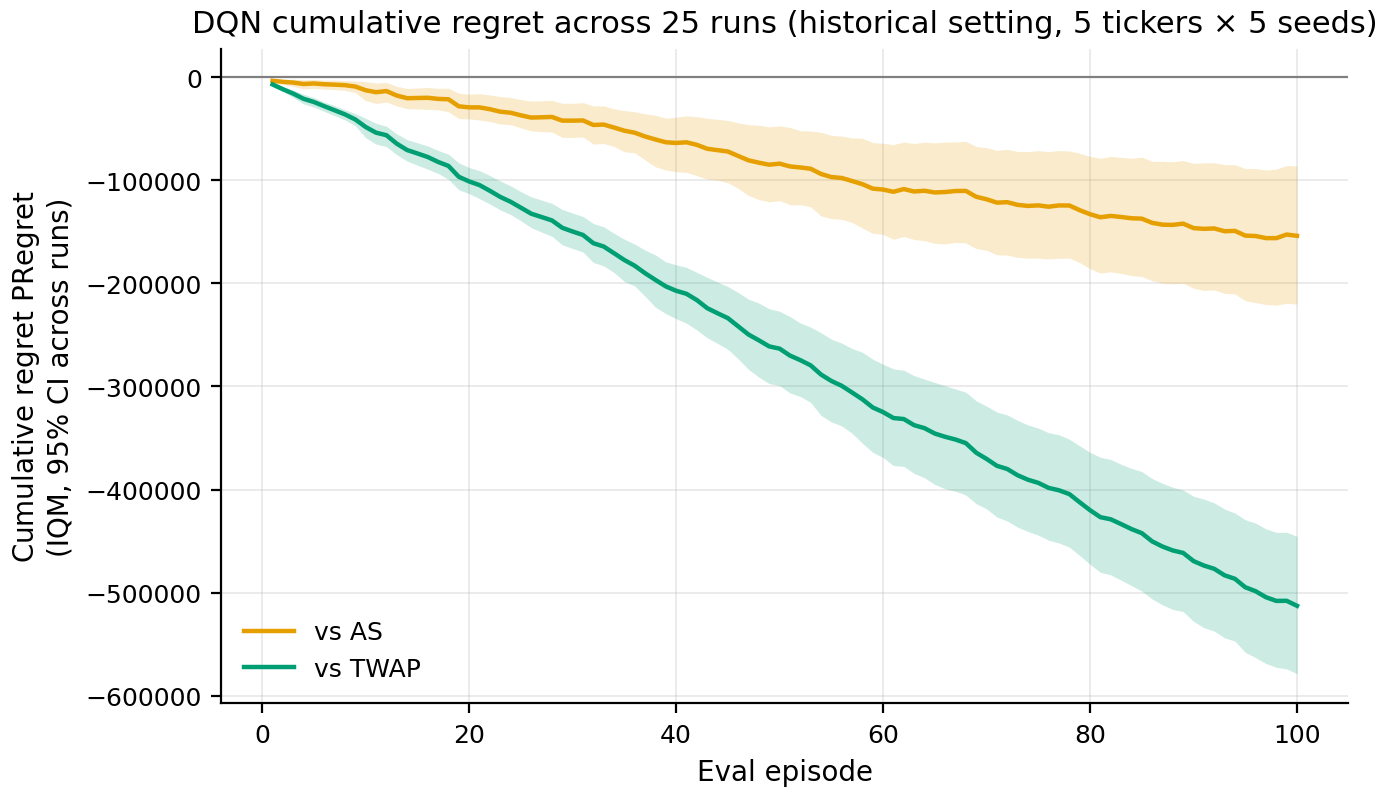}
    \caption{Cumulative regret of the DQN-learned policy vs benchmarks in the historical setting (aggregated across both all five stocks and seeds, 100 evaluation episodes)}
    \label{fig:regret_training_hist}
\end{figure}

\begin{table}[H]
\centering
\resizebox{\textwidth}{!}{\begin{tabular}{lcccccc}
\toprule
\textbf{Symbol} & \textbf{AS} & \textbf{TWAP} & \textbf{DQN} & \textbf{DDPG} & \textbf{TD3} & \textbf{SAC} \\
\midrule
CAT & 10,000.0 & 9,947.6 & 17,499.3 & 21,016.9 & 24,285.5 & 23,480.5 \\
GOOGL & 10,124.5 & 10,399.4 & 16,146.1 & 26,321.3 & 24,164.3 & 22,097.5 \\
JPM & 9,860.6 & 9,558.3 & 17,465.4 & 17,206.3 & 20,088.4 & 22,978.2 \\
MSFT & 10,039.6 & 9,609.9 & 16,631.8 & 17,674.4 & 21,482.1 & 21,566.6 \\
PG & 9,814.6 & 9,823.7 & 14,798.4 & 20,206.1 & 19,141.0 & 21,699.9 \\
\midrule
IQM & 9,969 & 9,796 & 16,593 & 19,857 & 21,795 & 22,262 \\
95\% CI & [9,901, 10,036] & [9,677, 9,987] & [13,774, 18,930] & [17,502, 22,495] & [20,049, 23,422] & [20,682, 23,811] \\
\bottomrule
\end{tabular}}
\caption{Multiseed evaluation results (per-ticker IQM of the per-seed mean undiscounted returns, 100 episodes)}
\label{tab:dqn_results_multiseed}
\end{table}

\begin{figure}[ht]
    \centering
    \includegraphics[width=0.7\linewidth]{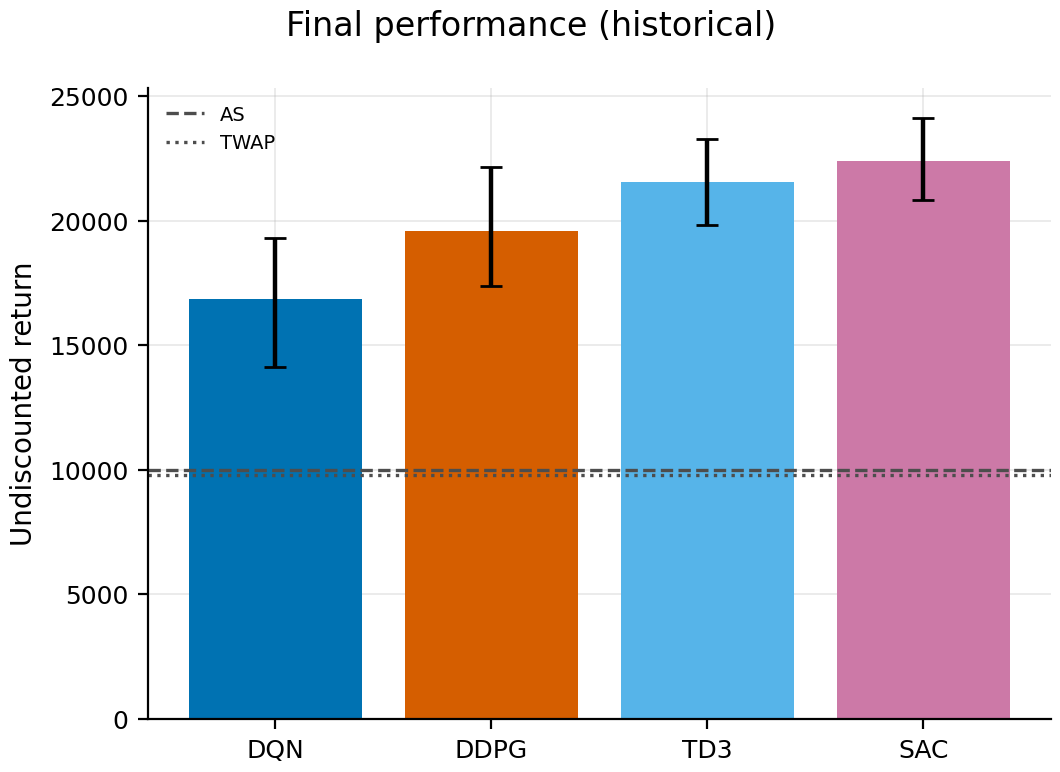}
    \caption{Per-algorithm IQM of per-ticker returns with bootstrap 95\% CIs in the historical setting (see Table \ref{tab:dqn_results_multiseed})}
    \label{fig:algorithm_comparison_hist}
\end{figure}

The results are very similar to the synthetic setting simulations, with the important distinction that TWAP is now much closer to AS, because it now is able to liquidate all its inventory. This can be justified by the fact that the exogenous market-taker arrival rate in the continuous phase is the same across
settings, at approximately one order per second per side. Per-step liquidity in the continuous phase is therefore much thinner in the synthetic setting than in the historical one. We similarly observe Figure \ref{fig:episode-eval-hist-cancel} the same property regarding the cancellation strategy of the agent. Cancellations accumulate towards the end of the auction when the agent has more available information. \vspace{0.5em}

As in the synthetic setting, all four learners beat both benchmarks on average returns, again with TD3 and SAC the strongest. DQN can yield policies that outperform stylized reference benchmarks on average and provide liquidity in the auction to profit from exchange rebates, advocating for DRL strategies over the classical AS benchmark to maximize the return. These findings suggest that RL has the potential to be effective for market making in complex structures beyond simple LOBs with closing auction, as for example workup session or AHEAD mechanism \cite{duffie2017size,derchu2024ahead} or sequence of periodic auctions \cite{milgrom1985economics,madhavan1992trading, budish2014implementation}.

\newpage

\bibliographystyle{plain}
\bibliography{references}

\end{document}